\def\be{\begin{equation}}
\def\ee{\end{equation}}
\def\beq{\begin{eqnarray}}
\def\eeq{\end{eqnarray}}
\newtheorem{thm}{Theorem}[section]
\newtheorem{lem}[thm]{Lemma}
\begin{document}
\title{Static Solutions for $4^{th}$ Order Gravity}

\author{William Nelson\footnote{nelson@gravity.psu.edu} }
\affiliation{Institute of Gravitation and the Cosmos, Penn State University, State College, PA 16801, U.S.A. }

\begin{abstract}
The Lichnerowicz and Israel theorems are extended to higher order theories of gravity. In particular it
is shown that Schwarzschild is the {\it unique} spherically symmetric, static, asymptotically flat, black-hole solution,
provided the spatial curvature is less than the quantum gravity scale {\it outside} the horizon. It is then
shown that in the presence of matter (satisfying certain positivity requirements), the only static and
asymptotically flat solutions of General Relativity that are also solutions of higher order gravity are the
vacuum solutions. 
\end{abstract}
\pacs{04.50.Kd,04.20.Jb,04.20.Ex,04.70.Bw}
\maketitle

%
%

\section{Introduction} \label{sec:intro}

Whilst General Relativity (GR) remains the most successful classical theory we have, one may
question whether it is possible for there to be corrections to it. Indeed we expect GR to break down
as we approach the Planck scale where the quantum theory of gravity (whatever it is) will become
dominant. In this sense GR is not a full physical theory, only an excellent approximation to some
(presumably) complicated underlying theory, with the approximation getting better and better
as the scales we consider are further and further from the Planck scale\footnote{It is important
to note that here the Planck scale can include both very small (UV) scales {\it and} very large (IR)
scales (see for example\cite{Adams:2006sv}).}.

Over the past century there have been many important theorems proved about the
solutions and structure of GR and an important question is whether these theorems are also valid in
theories that deviate from GR at some scale. In this paper we will look in particular at two theorems,
firstly the Lichnerowicz theorem~\cite{lichnerowicz}, which tells us that the only static, asymptotically flat,
geodesically complete, vacuum, solution to Einstein's equations is flat space-time. We then consider
the Israel theorem~\cite{Israel:1967wq} which demonstrates that the only static, asymptotically flat,
vacuum space-time, which contains past
and future event horizons (that intersect on a surface that is topologically $S^2$) is given by 
the Schwarzschild metric\footnote{The Israel theorem also shows that the Reissner-Nordstr\"{o}m metric must be
the solution if the vacuum is replaced by an electromagnetic field.}.
This `no-hair' theorem
is a striking result in classical GR and extensions of it to more general gravity theories, in
particular as we approach the scales on which GR is expected to be violated, would provide us
with insight into the transition from GR to full quantum gravity.

The action for GR is the well-known Einstein-Hilbert action which is linear in the Ricci
scalar $R$ (here for simplicity we neglect the cosmological constant).
A natural extension of this is to include terms of higher powers of curvature 
invariants to find,
\be\label{eq:action_gen}
 {\cal S}_{\rm Grav.} = \int {\rm d}^4 x \sqrt{-g}\left( \frac{\gamma}{\kappa^2} R
 -\alpha_0 \hbar R^{abcd}R_{abcd} - \alpha \hbar R^{ab}R_{ab} + \beta \hbar R^2 + {\cal O}\left( R^3\right) \right)~,
\ee
where we have written only terms that are only $2^{\rm nd}$ order in the curvature. Through-out
we will use Latin indices to label space-time components $a,b,\dots = 0,\dots,3$, our signature is
$\left( -,+,+,+\right)$ and our conventions for curvature are $R_{abc}^{~~~d} = \partial_b \Gamma^d_{~ac}
+\dots$ and $R_{ab} = R^c_{~acb}$. In Eq.~(\ref{eq:action_gen}) the coefficients $\alpha_0$, $\alpha$,
$\beta$ and $\gamma$ are dimensionless number and $\kappa^2 = 32\pi G$,
so GR is recovered simply by setting the coefficients of the higher order terms to zero ($\alpha_0=\alpha=\beta=0$)
and taking $\gamma = 2$. Also note that the presence of the $\hbar$ in Eq.~(\ref{eq:action_gen}) is only due to
dimensions i.e.\ this remains a classical theory. If these correction terms are motivated from
some quantum gravity theory, then we may expect the coefficients $\alpha_0$, $\alpha$ and $\beta$ to be
order unity, however in general they can take any value.

In fact, in $4$-dimensions, the Gauss-Bonnet combination of second order curvature invariants
integrates to a topological invariant, i.e.\
\be
 R^{abcd}R_{abcd} - 4R^{ab}R_{ab} + R^2 = {\rm total\ divergence}~,
\ee
which we can use to eliminate the $R^{abcd}R_{abcd}$ term from Eq.~(\ref{eq:action_gen}). Thus, up to
second order in the curvature, the most general correction to GR comes from the action,
\be\label{eq:action}
 {\cal S}_{\rm Grav.} = \int {\rm d}^4 x \sqrt{-g}\left( \frac{\gamma}{\kappa^2} R
 - \alpha\hbar R^{ab}R_{ab} + \beta \hbar R^2  \right)~.
\ee
This is the theory that we will work with and in the following we refer to it as
$4^{\rm th}$ order gravity, since the equations of motion involve, at most, $4^{\rm th}$
derivatives of the metric. Historically, theories of this type, with $\gamma=0$ were
considered for many reasons, not least because of their improved behaviour under renormalisation,
however it was shown that without a term linear in $R$, the theory (for $\alpha=0$) does not 
couple to matter correctly~\cite{Pechlaner:1987nt}. The theory, with general $\gamma$
was first considered in~\cite{Gregory:1947zz} and later in~\cite{Stelle:1977ry}, where
a perturbative analysis around the Schwarzschild solution was performed, whilst the initial value
formulation of the theory was given in~\cite{Noakes:1983xd}. The consequence for black holes in
higher dimensions has been consider in~\cite{Frolov:2009qu} and the cosmological implications
of $4^{\rm th}$ order gravity were investigated in~\cite{Starobinsky2,Starobinsky1,Schmidt:2006jt,Macrae:1981ic},
with the extension to include
(tensor) perturbations appearing only recently~\cite{Nelson:1}.
The special case in which  the correction terms are conformal (when $\alpha = 3\beta$, which gives
the Bach-Einstein equations) has received considerable attention (see for example~\cite{Schmidt:1984bf})
and has recently appeared in the context of non-commutative geometry~\cite{Chamseddine:2006ep}, with
much work going into restricting the parameters of this form of the theory via cosmological and astrophysical
observations~\cite{Nelson:2008uy,Nelson:2009wr,Nelson:2010rt,Nelson:2010ru,Kolodrubetz:2010xi}. More
generally, corrections to GR of this type are expected from String Theory (see for example~\cite{Johnson:book}),
renormalization group techniques~\cite{Reuter:2001ag} and Loop Quantum Gravity~\cite{Bojowald:2006ww}.

In addition there has been a great deal of phenomenological work on the cosmological consequences of higher order gravity
theories, usually in the form of $f\left( R\right)$ theories (see for example ~\cite{Sotiriou:2008rp,delaCruzDombriz:2009et,Nojiri:2006ri}).
 Despite this it is important to note
that theories defined by Eq.~(\ref{eq:action}) cannot represent true physical theories for arbitrary coefficients 
due to the presence of ghosts~\cite{Stelle:1977ry,Kawai:1998ab,DeFelice:2006pg}. Not only are ghost a problem for the quantisation of the theory, they
also lead to a break-down of causality at the classical level. Note however that with certain restrictions on the
coefficients (e.g.\ $\alpha=0$ and $\beta<0$), all the difficulties associated with these ghosts can be eliminated.
Here we will take the (conservative) view that, just as for GR,  $4^{\rm th}$ order gravity should be
considered only as an approximation to the underlying, full theory (which is presumable free of ghosts etc.).

In this paper we shall demonstrate three key results, the first two: an extension of the Lichnerowicz
(Section~(\ref{sec:lich_thm})) and
Israel (Section~(\ref{sec:israel})) theorems to $4^{\rm th}$ order gravity (for space-times satisfying some restriction
on the magnitude of the spatial curvature) suffice to demonstrate that
Schwarzschild is the {\it unique}, spherically symmetric, static, asymptotically
flat, vacuum solution to the theory\footnote{This result holds for black-holes in which the spatial curvature
outside the horizon is small relative to the scale set between $\gamma \kappa^{-2}$ and $\hbar \alpha$. If
these corrections come from some quantum gravity theory, this restriction is essentially that the spatial
curvature be small compared to the quantum gravity scale.}. The final result (Section~(\ref{sec:non_vac}))
considers non-vacuum solutions to the theory and demonstrates that for static, asymptotically flat 
space-times with matter satisfying certain positivity requirements, it is {\it only} the vacuum solutions of GR
and $4^{\rm th}$ order gravity that agree; all other solutions of GR fail to be solutions of $4^{\rm th}$ order gravity.

%
%
\section{Static solutions of the equations of motion}\label{sec:1}
Varying the action given in Eq.~(\ref{eq:action}) with respect to the metric, one finds~\cite{Stelle:1977ry},
\beq\label{eq:EoM}
 H_{ab} &=& \left( \alpha - 2\beta\right) R_{;a;b} - \alpha \Box R_{ab} - \frac{1}{2}\left( \alpha -4\beta\right)
g_{ab}\Box R + 2\alpha R^{cd}R_{acbd} \nonumber \\
&&- 2\beta R R_{ab} - \frac{1}{2} g_{ab} \left( \alpha R^{cd}R_{cd}
- \beta R^2\right) - \frac{\gamma}{\hbar \kappa^{2}}\left( R_{ab} - \frac{1}{2} g_{ab}R\right) = -\frac{1}{2\hbar}T_{ab}~.
\eeq
Considering the vacuum, the trace part gives the Klein-Gordon equation for
the Ricci scalar,
\be\label{eq:trace}
 \left( \alpha - 3\beta\right)\left( \Box - \frac{\gamma \kappa^{-2}}{2\left( \alpha - 3\beta\right)} \right) R =0~.
\ee
It is well known that there are no static, asymptotically constant solutions of the
Klein-Gordon equation for a scalar field (see for example~\cite{Schmidt}), however since the method used to prove this result
will be the one that is later extended to the full equation of motion, Eq.~(\ref{eq:EoM}),
we give it explicitly here.

\begin{thm}\label{thm:p1}
If the space-time is static (${\cal L}_{t} g_{ab}=0$) and $D_aR\rightarrow 0$
sufficiently fast at infinity, then $\left( \Box - m^2 \right) R=0$, for $m\neq0 \in \mathbb{R}$ implies $R=0$.
If the space-time is static and $R\rightarrow 0$ sufficiently fast at infinity
the equation $\Box R=0$ implies $R=0$.
\end{thm}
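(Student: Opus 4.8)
The strategy is the standard energy-integral (or "multiply-and-integrate") argument adapted to a static spacetime. First I would exploit staticity: since $\mathcal{L}_t g_{ab}=0$ there is a timelike Killing vector $t^a$, and the spacetime foliates into spatial slices $\Sigma$ with the metric taking the form $g_{ab}dx^a dx^b = -N^2 dt^2 + h_{ij}dx^i dx^j$ where $N$ and $h_{ij}$ are $t$-independent. Because $R$ is a scalar built from the (static) metric, $R$ is itself $t$-independent, so $t^a\nabla_a R = 0$ and the d'Alembertian reduces to a purely spatial operator: $\Box R = \frac{1}{N}D^i\!\left(N D_i R\right)$, where $D_i$ is the covariant derivative of $h_{ij}$. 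This is the key structural simplification — it converts the hyperbolic wave equation into an elliptic equation on $\Sigma$.

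Next, for the massive case $(\Box - m^2)R = 0$, I would multiply through by $N R$ and integrate over a large region of $\Sigma$ (out to a cutoff), use the identity $N R\, D^i(N D_i R) = D^i(N^2 R D_i R) - N^2 (D R)^2$, and integrate by parts. The volume term $D^i(N^2 R D_i R)$ becomes a boundary flux integral of $N^2 R D_i R$ over a large sphere, which vanishes in the limit provided $R$ and $D_a R$ fall off sufficiently fast at infinity — this is exactly the asymptotic hypothesis. What remains is
\begin{equation}
\int_\Sigma N\left( (DR)^2 + m^2 R^2 \right) \sqrt{h}\, d^3x = 0~.
\end{equation}
Since $N>0$ away from horizons and $m^2>0$, both terms in the integrand are non-negative, forcing $DR = 0$ and $R = 0$ pointwise (the $R^2$ term alone suffices). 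For the massless case $\Box R = 0$ one runs the same argument with $m=0$; now only $\int_\Sigma N (DR)^2 = 0$ survives, giving $DR=0$, hence $R$ is constant on each connected slice, and the faster-than-needed decay $R\to 0$ at infinity pins that constant to zero.

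The main obstacle — and the place where care is genuinely needed rather than routine — is controlling the boundary term at spatial infinity and making precise what "sufficiently fast" means: one needs $N^2 R D_i R$ integrated over a sphere of radius $r$ to go to zero, and since the sphere's area grows like $r^2$ (with $N\to 1$ by asymptotic flatness), it suffices that $R\, \partial_r R = o(r^{-2})$, e.g. $R = o(r^{-1/2})$ with correspondingly decaying gradient, or more simply the decay stated in the theorem. A secondary subtlety is the behaviour of $N$: it vanishes on a Killing horizon, so strictly the integral should be taken over the region exterior to any horizon, where $N>0$; since the theorem's hypotheses are about asymptotics this is a mild point, but it foreshadows exactly the restriction that will matter when this argument is pushed to the full equation~(\ref{eq:EoM}) and the Israel-type theorem, where the horizon boundary contributes and the positivity of the bulk integrand is no longer automatic. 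I would flag that the same manipulation applied later to $R_{ab}$ rather than the scalar $R$ will not have manifestly-signed integrand, which is why the curvature-smallness hypothesis enters there.
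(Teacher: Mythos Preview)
Your proposal is correct and follows essentially the same argument as the paper: the paper writes the lapse as $\lambda^{1/2}$ (with $\lambda=-t^at_a$), multiplies by $\lambda^{1/2}R$, integrates by parts on the slice, and obtains exactly your integral $\int_{\cal S}\lambda^{1/2}\bigl[(DR)^2+m^2R^2\bigr]=0$. One small slip: after multiplying by $NR$ the $1/N$ in $\Box R$ cancels one power, so the correct identity is $R\,D^i(N D_iR)=D^i(NRD_iR)-N(DR)^2$ and the boundary flux is $NRD_iR$, not $N^2RD_iR$; your final integral already reflects this, so the argument is unaffected.
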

\begin{proof}
We decompose the metric into components parallel and perpendicular to the constant time space-like hyper-surfaces ${\cal S}$ via
$g_{ab} = h_{ab} - \frac{1}{\lambda} t_a t_b$, where $t_a$ is the time-like Killing field, which has norm $t^at_a = -\lambda$
and $h_{ab}$ is spatial metric on ${\cal S}$.
Note the following useful identities
\be
 {\cal L}_t R= t^b\nabla_b R =0~,
\ee
\be\label{eq:0.1}
 \nabla_a t_b = \frac{1}{2\lambda} \left( t_b\nabla_a\lambda - t_a \nabla_b\lambda\right)~,
\ee
and
\be\label{eq:0.2}
 t^a\nabla_a \lambda = 0~.
\ee

Using these one can readily show that,
\beq
 g^{ab}\nabla_a \nabla_b R - m^2 R = D^a D_a R+ \frac{1}{2\lambda} \left( D^a \lambda\right)\left( D_a R\right) - m^2 R = 0~,
\eeq
where $D_a$ is the spatial covariant derivative compatible with $h_{ab}$.
Multiplying this equation by $\lambda^{1/2}R$ and integrating it over the entire spatial slice, one finds
\beq
&& \int_{\cal S} \sqrt{h} {\rm d}^2x\left[ \lambda^{1/2} RD^a D_a R + \frac{1}{2} \lambda^{-1/2} R\left( D^a\lambda \right)\left( D_a
R\right) - m^2 \lambda^{1/2} R^2 \right] =0~.
\eeq
Integrating the first term by parts this becomes,
\beq\label{eq:pf1}
 \int_{\cal S} \sqrt{h}{\rm d}^3x\left[ D^a\left( \lambda^{1/2} RD_a R\right) - \lambda^{1/2} \left( D^a R\right)\left( D_a R\right)
-m^2 \lambda^{1/2} R^2\right] = 0~.
\eeq
If the spatial slice is asymptotically constant i.e.\ $D_a R \rightarrow 0$ sufficiently fast at infinity,
then the boundary term vanishes. The second and third
terms in the integrand of Eq.~(\ref{eq:pf1}) are manifestly negative definite and hence each term must vanish at every spatial point.
If $m\neq0$ this implies $R=0$, for the case of $m=0$ this gives $D_aR=0$ and hence we additionally require
that $R\rightarrow 0$ at infinity  in order to find $R=0$.
\end{proof}
Note that here we require only that $D_aR\rightarrow 0$ (or $R\rightarrow 0$) sufficiently fast at infinity, which is
satisfied by asymptotically constant (or flat) space-times, but is in general a weaker condition. In the following
we will consider asymptotically constant (or flat) space-times since they are of more interest,
however all the results hold also for the weaker condition given above.

The basic method of this proof is the following. Decompose the metric onto (and perpendicular to)
constant time space-like hyper-surfaces and use this to write the $4$-dimensional covariant derivatives as spatial, $3$-dimensional
covariant derivatives plus correction terms. Multiply the resulting equation by a suitable factor and integrate over the
space-like hyper-surface. Prove that, up to boundary terms, the integrand has a definite sign and hence vanishes at each spatial
point. This is precisely the same method that we will use to prove that, with some restrictions on the magnitude
of the spatial curvature on ${\cal S}$, there are no static, asymptotically constant
solutions to the trace-free part of Eq.~(\ref{eq:EoM}), which, using theorem~(\ref{thm:p1}) can be written as,
\beq\label{eq:trace_free}
\hbar \alpha
 \left( \Box R_{ab} - \frac{\gamma}{\hbar\alpha \kappa^{2}} R_{ab} + \frac{1}{2}g_{ab} R^{cd}R_{cd} - 2R^{cd}R_{acbd}\right) =0~,
\eeq
for $\alpha-3\beta \geq 0$.
Since Eq.~(\ref{eq:trace_free}) is rather more complicated than Eq.~(\ref{eq:trace}), we proceed by first
proving a series of lemmas~\footnote{A similar line of argument has used in~\cite{Schmidt}. }.

\begin{lem}\label{lem:1}
If the space-time is static and asymptotically constant,
then $\left( \Box - m^2 \right) R_{ab}=0$, for $m\neq 0\in \mathbb{R}$ implies $R_{ab}=0$. 
If the space-time is static and asymptotically flat,
then $\Box  R_{ab}=0$, implies $R_{ab}=0$. 
\end{lem}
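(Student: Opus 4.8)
The plan is to mimic the proof of Theorem~\ref{thm:p1} term by term, now with the tensor $R_{ab}$ in place of the scalar $R$. First I would perform the same $3{+}1$ decomposition, writing $g_{ab}=h_{ab}-\lambda^{-1}t_at_b$ with $t^a$ the timelike Killing field of norm $-\lambda$, and re-express the four-dimensional wave operator $\Box R_{ab}$ acting on the (static) Ricci tensor in terms of the spatial Laplacian $D^cD_c$ compatible with $h_{ab}$, plus correction terms proportional to $D\lambda$ and to the spatial curvature of $\mathcal{S}$ (the latter appearing through commutators of $D$'s and through the Gauss--Codazzi relations that relate the four-dimensional curvature restricted to $\mathcal{S}$ to the intrinsic three-dimensional curvature). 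The staticity identities \eqref{eq:0.1}--\eqref{eq:0.2}, together with ${\cal L}_tR_{ab}=0$, should reduce everything to a quantity living on $\mathcal{S}$.

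Next I would contract the resulting equation with $\lambda^{1/2}R^{ab}$ and integrate over the slice $\mathcal{S}$, then integrate the leading term by parts. As in the scalar case, the total-derivative term becomes a boundary term that vanishes because the space-time is asymptotically constant/flat (so $R_{ab}$ and $D_cR_{ab}$ fall off fast enough), the ``kinetic'' term gives $-\lambda^{1/2}(D^cR^{ab})(D_cR_{ab})$, and the mass term gives $-m^2\lambda^{1/2}R^{ab}R_{ab}$, both manifestly non-positive since $\lambda>0$ off the horizon and the natural metric on symmetric tensors built from $h_{ab}$ is positive definite. The new feature compared with the scalar theorem is the collection of curvature-dependent terms generated by the decomposition; the whole point of the hypothesis is that these are controlled, so for $m\neq0$ one concludes $R^{ab}R_{ab}=0$ hence $R_{ab}=0$, while for $m=0$ one first gets $D_cR_{ab}=0$ and then invokes asymptotic flatness to kill the covariantly constant remainder.

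The main obstacle I anticipate is precisely the sign of those curvature correction terms: the commutator $[D^c,D_a]$ and the Gauss--Codazzi substitutions produce terms schematically of the form (spatial Ricci of $\mathcal{S}$)$\times R\times R$ which are not automatically sign-definite. In the scalar case these simply did not arise, so the scalar proof went through unconditionally; here the most honest route is to absorb them by assuming the spatial curvature of $\mathcal{S}$ is bounded by the scale $\gamma\kappa^{-2}/(\hbar\alpha)$ set by the ``mass'' term, so that the genuinely negative $-m^2\lambda^{1/2}R^{ab}R_{ab}$ piece dominates the indefinite curvature piece pointwise and the integrand remains non-positive. This is exactly the ``restriction on the magnitude of the spatial curvature'' flagged in the introduction, and I expect the lemma (and the later lemmas feeding into \eqref{eq:trace_free}) to be stated under that proviso; the bookkeeping of which curvature invariants appear and with what coefficients is the routine-but-delicate part. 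For $m=0$ one loses the dominating term, so I would expect the $\Box R_{ab}=0$ half to need the stronger asymptotic-flatness hypothesis (as stated) and possibly a separate argument showing a covariantly constant $R_{ab}$ on an asymptotically flat slice must vanish.
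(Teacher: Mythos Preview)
Your overall strategy---decompose, multiply by a $\lambda^{1/2}$-weighted Ricci, integrate over ${\cal S}$, and argue sign-definiteness---is exactly the paper's. Indeed, since the Codazzi relation kills the mixed $t$--$h$ components of $R_{ab}$, your proposed contraction with $\lambda^{1/2}R^{ab}$ is literally the same as the paper's contraction with $\lambda^{1/2}\bar{R}^{ef}h^a_{~e}h^b_{~f}+\lambda^{-1/2}\,{}^{(3)}R\,t^at^b$; the paper just organises this as two separate projections (spatial-spatial and $t^at^b$) and then adds the resulting integrals.

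Where you go wrong is in diagnosing the obstacle. You anticipate indefinite curvature terms from commutators $[D^c,D_a]$ and Gauss--Codazzi, and you propose to beat them with the curvature bound $|{}^{(3)}R|\lesssim \gamma/(\hbar\alpha\kappa^2)$. But no commutation is performed in reducing $\Box R_{ab}$ to spatial data: the extra terms arise solely from the difference between $\nabla$ and $D$, i.e.\ from derivatives of $t_a$, and are all built from $D\lambda$, not from the intrinsic curvature of ${\cal S}$. The paper's key point, which you are missing, is that after adding the spatial-spatial and $t$--$t$ projections these $D\lambda$ cross terms assemble into a manifest perfect square,
\[
-\tfrac{1}{2}\lambda^{-3/2}\bigl(\bar{R}^{cd}D_c\lambda+{}^{(3)}R\,D^d\lambda\bigr)\bigl(\bar{R}_{ad}D^a\lambda+{}^{(3)}R\,D_d\lambda\bigr),
\]
so the integrand is negative definite \emph{without any hypothesis on the spatial curvature}. (The contracted Bianchi identity even rewrites this square as $-2\lambda^{1/2}(D^a\bar R_{ac})(D_b\bar R^{bc})$, removing all negative powers of $\lambda$.) Consequently this lemma is unconditional; the curvature bounds you invoke enter only later, in the generalised Lichnerowicz theorem, where the genuinely cubic term $R^{cd}R_{acbd}$ produces $\bar R^a_{~b}\bar R^b_{~c}\bar R^c_{~a}-({}^{(3)}R)^3$, which is not sign-definite. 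Your instinct that ``something indefinite must be tamed by $m^2$'' is right for that theorem but premature here.
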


\begin{proof}
We begin by noting several identities, which are proved in Appendix~\ref{app}.
\beq\label{eq:1}
t^a \nabla _a t^b \nabla_b R_{cd} = 2\left[ \left( R_{a(c} \nabla_{d)} t^b\right)\left( \nabla_bt^a\right) +
R_{ab}\left( \nabla_c t^b\right) \left( \nabla_d t^a\right)\right]
\eeq
\beq\label{eq:2}
 \left( \nabla _d t^c\right) \left( \nabla_at^d\right) = \frac{-1}{4\lambda^2} t^c t_a \left( \nabla_d\lambda\right)
\left( \nabla^d \lambda\right) + \frac{1}{4\lambda} \left( \nabla^c\lambda\right) \left( \nabla_a\lambda\right)~,
\eeq
\beq\label{eq:2.25}
 \frac{1}{2\lambda^2}h^{ab}\left( \nabla_a\lambda\right) t^ct^d\nabla_bR_{cd} = \frac{1}{2\lambda} 
 \left( \nabla^a\lambda\right)\left(\nabla_a ^{(3)}R\right)~,
\eeq
and
\beq\label{eq:2.75}
 \frac{1}{\lambda}t^ct^dh^{ab}\nabla_a h^e_b \nabla_e R_{cd} = D^aD_a ^{(3)}R~,
\eeq
where $^{(3)}R$ is the $3$-dimensional Ricci scalar formed from the spatial metric $h_{ab}$.
Now, decomposing $g_{ab} = h_{ab} - \frac{1}{\lambda}t_a t_b$ one finds,
\be\label{eq:3}
 \left( \Box - m^2 \right) R_{ab} = h^{ab}\nabla_a h^e_b \nabla_e R_{cd} - \frac{1}{\lambda} t^a t^b \nabla_a \nabla_b R_{cd}
-m^2 R_{cd} = 0~.
\ee
Noting the staticity of $R_{ab}$,
\be
 {\cal L}_t R_{ab} \equiv t^a \nabla_a R_{cd} + R_{ad} \nabla_ct^a + R_{ca} \nabla_dt^a=0~,
\ee
and using Eq.~(\ref{eq:1}) and Eq.~(\ref{eq:2}), one can show (see Appendix~\ref{app})
that Eq.~(\ref{eq:3}) becomes,
\beq\label{eq:3.5}
&& h^{ab} \nabla_a h^e_b\nabla_e R_{cd} + \frac{1}{2\lambda}h^{ab}\left( \nabla_a \lambda\right)
\left( \nabla_bR_{cd}\right)
+ \frac{1}{4\lambda^3} \left[ t_ct^a\left( \nabla_b\lambda\right)
\left( \nabla ^b\lambda\right) - \lambda \left( \nabla_c\lambda\right) \left( \nabla^a\lambda\right)
\right] R_{ad} \nonumber \\
&&-\frac{1}{2\lambda^2}\left[ t^a\nabla_d\lambda - t_d\nabla^a\lambda\right]\left[ t^b\nabla_c\lambda
-t_c\nabla^d\lambda\right] R_{ba} 
+ \frac{1}{4\lambda^3} \left[ t_dt^a\left( \nabla_b\lambda\right)
\left( \nabla ^b\lambda\right) - \lambda \left( \nabla_d\lambda\right) \left( \nabla^a\lambda\right)
\right] R_{ac}\nonumber \\
&& -m^2R_{cd} = 0~.
\eeq
We now project this equation onto the spatial slice, ${\cal S}$ with $h^c_{e}h^d_{f}$ and
define the projection of $R_{ab}$ onto ${\cal S}$ as $\bar{R}_{ab} \equiv h_a^c h_b^dR_{cd}$
to find,
\beq\label{eq:4}
&& D^a D_a \bar{R}_{ef} + \frac{1}{2\lambda} \left( D_a\lambda\right) \left(
D^a\bar{R}_{ef}\right) - \frac{1}{4\lambda^2}\left( D_{e}\lambda\right)
\left( D^a\lambda\right) \bar{R}_{af} - \frac{1}{2\lambda^3} \left( D_{f}
\lambda\right)\left( D_{e}\lambda\right) t^at^bR_{ab}\nonumber \\
&& - \frac{1}{4\lambda^2}
\left( D^a\lambda\right) \left( D_{f} \lambda\right) \bar{R}_{ea}
-m^2 \bar{R}_{ef}=0~.
\eeq
Noting that the extrinsic curvature of the spatial slice vanishes (as can be shown by
direct calculation) the Codacci and Gauss equations reduce to,
\beq\label{eq:Codazzi}
 h_a^b t^c R_{bc} &=& 0~,\\
\label{eq:Gauss}
 \frac{1}{\lambda} t^a t^b R_{ab} &=& ^{(3)}R~,
\eeq
where we have use Theorem~(\ref{thm:p1}) to set $R=0$. Using these relations in Eq.~(\ref{eq:4})
we find,
\beq
&& D^a D_a \bar{R}_{cd} + \frac{1}{a\lambda} \left( D_a\lambda\right)\left( D^a\bar{R}_{cd}\right)
 -\frac{1}{4\lambda^2}\left( D_c\lambda\right) \left( D^a\lambda\right) \bar{R}_{ad}
-\frac{1}{2\lambda^2} \left( D_d\lambda \right) \left( D_c\lambda\right) ^{(3)}R\nonumber \\
&&-\frac{1}{4\lambda^2} \left( D^a\lambda\right)\left( D_d\lambda\right) \bar{R}_{ca} - m^2\bar{R}_{cd}=0~.
\eeq
Multiplying this by $\lambda^{1/2} \bar{R}^{cd}$ and integrating over the $3$-dimensional spatial slice gives,
\beq\label{eq:5}
&& \int_{\cal S} \sqrt{h} {\rm d}^3 x\Biggl[  D^a\left( \lambda^{1/2}\bar{R}^{cd} D_a \bar{R}_{cd} \right)
 - \lambda^{1/2}\left( D^a\bar{R}^{cd}\right)\left( D_a\bar{R}_{cd}\right) 
-\frac{1}{2} \lambda^{-3/2} \bar{R}^{cd}\bar{R}_{ad} \left( D_c\lambda\right)\left( D^a\lambda\right) \nonumber \\
&&-\frac{1}{2} \lambda^{-3/2} \left.\right.^{(3)}R\bar{R}^{cd} \left(D_d\lambda\right)\left( D_c\lambda\right)
-\lambda^{1/2}m^2 \bar{R}_{cd}\bar{R}^{cd}\Biggr]=0~.
\eeq

We now project Eq.~(\ref{eq:3.5}) perpendicular to the space-like hyper-surface, with $t^b t^c$ and again
use the Codacci and Gauss equations (Eq.~(\ref{eq:Codazzi}) and Eq.~(\ref{eq:Gauss}) respectively) to find,
\beq
&& t^c t^d h^{ab}\nabla_ah^e_b\nabla_eR_{cd} + \frac{1}{2\lambda} t^c t^dh^{ab}\left( \nabla_a\lambda\right)\nabla_b
R_{cd} - \frac{1}{2\lambda} \left( \nabla_b\lambda\right)\left( \nabla^b\lambda\right) ^{(3)}R 
\nonumber \\
&&- \frac{1}{2\lambda}\left( \nabla^a\lambda\right)\left( \nabla^b\lambda\right) R_{ab} 
- \lambda m^2 \left.\right. ^{(3)}R =0~.
\eeq
Multiplying this by $\lambda^{-1/2}\left.\right.^{(3)}R$, integrating over the $3$-dimensional
hyper-surface and using the identities given in Eq.~(\ref{eq:2.25}) and Eq.~(\ref{eq:2.75}), gives,
\beq\label{eq:6}
&& \int_{\cal S} \sqrt{h}{\rm d}^3x\Biggl[ D^a\left( \lambda^{1/2} \left.\right.^{(3)}R D_a^{(3)}R\right)
- \lambda^{1/2}\left( D^a\left.\right.^{(3)}R\right)\left( D_a\left.\right.^{(3)}R\right)
\nonumber \\
&&-\frac{1}{2}\lambda^{-3/2} \left( D_a\lambda\right)\left( D^a\lambda\right)\left(
^{(3)}R\right)^2 - \frac{1}{2} \lambda^{-3/2} \left( D^a\lambda\right)\left( D^b\lambda\right)
\bar{R}_{ab}~ ^{(3)}R - \lambda^{1/2}m^2\left( ^{(3)}R\right)^2\Biggr]=0~.\nonumber \\
\eeq

Finally, adding Eq.~(\ref{eq:5}) and Eq.~(\ref{eq:6}) one finds,
\beq\label{eq:7}
&& \int_{\cal S} \sqrt{h} {\rm d}^3 x\Biggl[ D^a\left( \lambda^{1/2} \bar{R}^{cd} D_a \bar{R}_{cd}
+\lambda^{1/2} \left.\right.^{(3)}RD_a~^{(3)}R\right)
-\lambda^{1/2}\left( D^a\bar{R}^{cd}\right)\left( D_a\bar{R}_{cd}\right)
\nonumber \\
&&-\lambda^{1/2}\left( D^a\left.\right.^{(3)}R\right)\left( D_a~^{(3)}R\right)
-\frac{1}{2}\lambda^{-3/2} \left( \bar{R}^{cd} D_c\lambda + ^{(3)}R D^d\lambda\right)
\left(  \bar{R}_{ad} D^a\lambda + ^{(3)}R D_d\lambda\right)
\nonumber \\
&&-\lambda^{1/2} m^2 \left( \bar{R}^{cd} \bar{R}_{cd} + \left(^{(3)}R\right)^2\right)\Biggr]=0~.
\eeq
This is sufficient to prove the desired result, however there is a simplification that can be made
by noting that the contracted Bianchi identity $\nabla^a\left( R_{ab} - \frac{1}{2}g_{ab}R\right)=0$
implies that
\be\label{eq:contracted_bianchi}
 \bar{R}_{dc}D^d\lambda + ^{(3)}RD_c\lambda = -2\lambda D^a\bar{R}_{ac}~.
\ee
Substituting this into Eq.~(\ref{eq:7}) explicitly removes the derivatives of $\lambda$ from the integrand and ensures that $\lambda$
only appears with positive powers. This will be crucial when we consider space-times with (null) horizons.

In any case, asymptotic constancy i.e. $D_a R_{cd} \rightarrow 0$ at infinity, ensures that the 
boundary terms in Eq.~(\ref{eq:7}) vanish. All the remaining terms are negative definite and hence
each term must independently vanish at every spatial point. For $m\neq 0$ this implies
$^{(3)}R=0$ and $\bar{R}_{cd}=0$, which together with theorem~(\ref{thm:p1}) and Eq.~(\ref{eq:Codazzi})
implies $R_{ab}=0$. For the $m=0$ case we find that $D_a^{(3)}R=0$ and $D_a\bar{R}_{cd}=0$ and hence  
additional require asymptotic flatness in order to find $R_{ab}=0$.
 
\end{proof}

This result is easily extended to include the third term in Eq.~(\ref{eq:trace_free}) by using the 
fact that, by Theorem~(\ref{thm:p1}) we can set $R=0$.
\begin{lem}\label{lem:2}
If the space-time static and asymptotically constant, then the pair of equations
$R=0$ and  $\left( \Box - m_1^2\right) R_{ab} + m_2g_{ab}R^{cd}R_{cd}=0$,
for $m_1\neq0 \in \mathbb{R}$ and $m_2 \in \mathbb{R}$ imply $R_{ab}=0$, whilst
for $m_1=0$, additionally requiring the space-time to be asymptotically flat implies $R_{ab}=0$.
\end{lem}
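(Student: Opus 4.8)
The plan is to reduce the statement to Lemma~\ref{lem:1} by means of a single algebraic observation. If $m_2=0$ the claim \emph{is} Lemma~\ref{lem:1}, so suppose $m_2\neq 0$. Taking the trace of $\left(\Box-m_1^2\right)R_{ab}+m_2 g_{ab}R^{cd}R_{cd}=0$ with $g^{ab}$, and using $R=0$ (so that $\Box R=0$ and $g^{ab}\Box R_{ab}=\Box R=0$), everything collapses to $4m_2 R^{cd}R_{cd}=0$, i.e.\ $R^{cd}R_{cd}=0$. The work of the lemma is then to show that this one scalar constraint already forces $R_{ab}=0$ in a static geometry --- which is not automatic, since in Lorentzian signature $R^{cd}R_{cd}$ is not sign-definite.

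To bridge that gap I would re-use the decomposition $g_{ab}=h_{ab}-\lambda^{-1}t_at_b$ exactly as in Theorem~\ref{thm:p1} and Lemma~\ref{lem:1}. Because the extrinsic curvature of ${\cal S}$ vanishes and $R=0$, the Codazzi and Gauss relations (\ref{eq:Codazzi}) and (\ref{eq:Gauss}) hold; Codazzi kills the mixed components of $R_{ab}$ and Gauss fixes the normal-normal component, giving $R_{ab}=\bar{R}_{ab}+\lambda^{-1}t_at_b\,{}^{(3)}R$ with $t^a\bar{R}_{ab}=0$. Contracting, $R^{cd}R_{cd}=\bar{R}^{cd}\bar{R}_{cd}+\left({}^{(3)}R\right)^2$, a sum of squares of objects intrinsic to the \emph{Riemannian} slice $\left({\cal S},h\right)$. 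Hence $R^{cd}R_{cd}=0$ at a point where $\lambda>0$ forces both $\bar{R}_{cd}=0$ and ${}^{(3)}R=0$ there, so $R_{ab}=0$. This positivity in the static slicing is the only substantive step; everything else is bookkeeping. For the $m_1=0$ sub-case one still falls back on Lemma~\ref{lem:1} (via $m_2=0$), which is where asymptotic flatness is needed, whereas for $m_2\neq 0$ the argument above uses only staticity and $R=0$.

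Alternatively --- and closer to the integration-by-parts machinery already in place --- one can just repeat the derivation of Eq.~(\ref{eq:7}) for the present equation and watch the extra term disappear: since $g_{ab}$ is pure trace, $m_2 g_{cd}R^{ef}R_{ef}$ contributes $+m_2\lambda^{1/2}\,{}^{(3)}R\,R^{ef}R_{ef}$ to the projection onto ${\cal S}$ (using $h^{cd}R_{cd}={}^{(3)}R$) and $-m_2\lambda^{1/2}\,{}^{(3)}R\,R^{ef}R_{ef}$ to the projection along $t^ct^d$ (using $g_{cd}t^ct^d=-\lambda$), so the two cancel on adding the analogues of Eq.~(\ref{eq:5}) and Eq.~(\ref{eq:6}); one recovers Eq.~(\ref{eq:7}) verbatim with $m\to m_1$ and concludes as in Lemma~\ref{lem:1}. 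Either way I do not anticipate a genuine obstacle here --- the identity $R^{cd}R_{cd}=\bar{R}^{cd}\bar{R}_{cd}+\left({}^{(3)}R\right)^2$ does all the real work --- which is no doubt why the paper calls the extension ``easy.''
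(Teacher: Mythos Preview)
Your second alternative is precisely the paper's argument: the paper contracts the full equation with $\lambda^{1/2}\bar R^{cd}h_c^{~a}h_d^{~b}+\lambda^{-1/2}\,{}^{(3)}R\,t^at^b$, observes that the $m_2 g_{ab}R^{cd}R_{cd}$ piece contributes $m_2\lambda^{1/2}\,{}^{(3)}R\,R^{cd}R_{cd}$ from the spatial part and $-m_2\lambda^{1/2}\,{}^{(3)}R\,R^{cd}R_{cd}$ from the $t^at^b$ part (using $h^{ab}\bar R_{ab}={}^{(3)}R$ and $t^at_a=-\lambda$), so it drops out pointwise and one is back to the setting of Lemma~\ref{lem:1}, recovering Eq.~(\ref{eq:7}).

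Your first route, however, is genuinely different and in fact sharper than what the paper does. Tracing with $g^{ab}$ and using $R=0$ gives $4m_2R^{cd}R_{cd}=0$; your static decomposition $R_{ab}=\bar R_{ab}+\lambda^{-1}t_at_b\,{}^{(3)}R$ (valid from Codazzi and Gauss once $R=0$) then yields the identity $R^{cd}R_{cd}=\bar R^{cd}\bar R_{cd}+\left({}^{(3)}R\right)^2$, a sum of squares on the Riemannian slice, so $R_{ab}=0$ follows \emph{pointwise} whenever $m_2\neq0$ --- with no integration, no boundary terms, and no asymptotic hypothesis at all. The paper's cancellation argument, by contrast, still routes through the integral identity Eq.~(\ref{eq:7}) and therefore retains the asymptotic constancy/flatness assumptions uniformly in $m_2$. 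What the paper's method buys is structural uniformity: it reuses the same machinery verbatim and feeds directly into Lemma~\ref{lem:3} and Theorem~\ref{thm:lich}, where the interesting term $R^{cd}R_{acbd}$ does \emph{not} cancel and the integral framework is genuinely needed. Your trace argument is cleaner for this lemma in isolation but does not extend to those later steps.
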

\begin{proof}
From the equation $\left( \Box - m_1^2\right) R_{ab} + m_2^2g_{ab}R^{cd}R_{cd}=0$ it
follows that,
\beq\label{eq:8}
 \left[ \lambda^{1/2} \bar{R}^{cd}h_c^a h_d^b + \lambda^{-1/2}\left.\right. ^{(3)}Rt^at^b\right]
\left[ \left( \Box - m_1^2\right) R_{ab} + m_2^2g_{ab}R^{cd}R_{cd}\right] =0~.
\eeq
Recalling that $t_at^a=-\lambda$ and noting that $h^{ab}\bar{R}_{ab} = ^{(3)}R$ 
we see that Eq.~(\ref{eq:8}) reduces to,
\be\label{eq:9}
 \left[ \lambda^{1/2} \bar{R}^{cd}h_c^a h_d^b + \lambda^{-1/2}\left.\right. ^{(3)}Rt^at^b\right]
\left( \Box - m_1^2\right) R_{ab} = 0~,
\ee
which is independent of the value of $m_2$. By Lemma~(\ref{lem:1}), for $m_1\neq 0$, this implies $R_{ab}=0$,
provided the space-time is static and asymptotically constant, whilst for $m_1 = 0$ and the 
space-time being static and asymptotically flat implies $R_{ab}=0$.
Indeed integrating Eq.~(\ref{eq:9}) over the spatial slice exactly gives  Eq.~(\ref{eq:7}).
\end{proof}

The final step that is required in order to allow us to consider Eq.~(\ref{eq:trace_free}) is
to extend the above Lemmas to include a term of the form $R^{ab}R_{acbd}$. To do this
note that by using the definition of the Riemann tensor,
\be
 \left( \nabla_a\nabla_b - \nabla_b\nabla_a\right) R_{cd} = R_{abc}^{~~~e}R_{ed} + R_{abd}^{~~~e}R_{ce}~,
\ee
and the fact that the contracted Bianchi identity, for $R=0$, gives,
\be
 \nabla_cR_b^{~c} = \frac{1}{2} \nabla_bR = 0~,
\ee
we can write (for $R=0$),
\be\label{eq:10}
 R^{cd} R_{acbd} = - \nabla_c \nabla_a R_b^{~c} + R_{ac}R_b^{~c}~.
\ee
Thus we see that this term is in fact closely related to $\Box R_{ab}$ and can be dealt with in a similar manner.
\begin{lem}\label{lem:3}
 For a static space-time the following relation holds:
\beq\label{eq:lem_proof}
&& \int \sqrt{h}{\rm d}^3x\left[ \left( \lambda^{1/2} \bar{R}^{ef} h^a_e h^b_f
 + \lambda^{-1/2} \left.\right.^{(3)}R t^at^b \right)R_{acbd}R^{cd} \right] = \nonumber \\
&& \int \sqrt{h} {\rm d}^3 x \Biggl[ 
D^a\left( -\lambda^{1/2} \bar{R}^{ef} D_e \bar{R}{af} + \lambda^{1/2} \left.\right.^{(3)}R
D^b\bar{R}_{ab}\right)
+\lambda^{1/2} \left( D_a \bar{R}_{ef}\right)\left( D^e \bar{R}^{af}\right) \nonumber \\
&&+ \lambda^{1/2} \left( D^a \bar{R}_{ac} \right) \left( D_b \bar{R}^{bc}\right)
-2\lambda^{1/2} \left( D^c \bar{R}_{ca}\right)\left( D^a \left.\right.^{(3)}R\right)
+\lambda^{1/2} \left[ \bar{R}^a_{~b}\bar{R}^b_{~c}\bar{R}^c_{~a} - \left( ^{(3)}R\right)^3\right]
\Biggr]~.\nonumber \\
\eeq
\end{lem}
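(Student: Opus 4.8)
The plan is to start from the purely algebraic identity Eq.~(\ref{eq:10}), which (for $R=0$) rewrites $R^{cd}R_{acbd}$ as $-\nabla_c\nabla_a R_b^{~c}+R_{ac}R_b^{~c}$, and to contract both sides with the projection tensor $P^{ab}\equiv\lambda^{1/2}\bar R^{ef}h^a_e h^b_f+\lambda^{-1/2}\,{}^{(3)}R\,t^a t^b$ appearing on the left of Eq.~(\ref{eq:lem_proof}). Integrating $\int_{\cal S}\sqrt h\,{\rm d}^3x$ then splits the computation into a purely algebraic piece, $\int P^{ab}R_{ac}R_b^{~c}$, and a derivative piece, $-\int P^{ab}\nabla_c\nabla_a R_b^{~c}$.

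For the algebraic piece I would first record that, since the extrinsic curvature vanishes, Eqs.~(\ref{eq:Codazzi})--(\ref{eq:Gauss}) together with $R=0$ give $R_{ab}=\bar R_{ab}+\lambda^{-1}\,{}^{(3)}R\,t_a t_b$ (the mixed components drop out by Eq.~(\ref{eq:Codazzi})). Substituting this into $P^{ab}R_{ac}R_b^{~c}$ and using $t^a t_a=-\lambda$, $h^{ab}\bar R_{ab}={}^{(3)}R$, and the orthogonality $h_a^b t^c R_{bc}=0$ collapses the contractions to $\lambda^{1/2}\big[\bar R^a_{~b}\bar R^b_{~c}\bar R^c_{~a}-({}^{(3)}R)^3\big]$, which is exactly the cubic term on the right-hand side of Eq.~(\ref{eq:lem_proof}).

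The derivative piece is the real content. Here I would decompose the four-dimensional $\nabla_c\nabla_a R_b^{~c}$ into spatial derivatives $D$ acting on $\bar R_{ab}$ and ${}^{(3)}R$ plus correction terms, using staticity (${\cal L}_t R_{ab}=0$), the identities Eqs.~(\ref{eq:1})--(\ref{eq:2.75}) proved in Appendix~\ref{app}, the vanishing extrinsic curvature, and Eqs.~(\ref{eq:Codazzi})--(\ref{eq:Gauss}); the corrections coming from $\nabla_a t_b$ are purely multiples of $\nabla_a\lambda$ by Eq.~(\ref{eq:0.1}). After contracting with $P^{ab}$ I would integrate by parts once to move one derivative off $\nabla_c\nabla_a R_b^{~c}$ and onto the $\bar R^{ef}$ and ${}^{(3)}R$ factors of $P^{ab}$; this generates the total-divergence term $D^a\!\big(-\lambda^{1/2}\bar R^{ef}D_e\bar R_{af}+\lambda^{1/2}\,{}^{(3)}R\,D^b\bar R_{ab}\big)$ together with the bulk quadratic pieces $\lambda^{1/2}(D_a\bar R_{ef})(D^e\bar R^{af})$ and $\lambda^{1/2}(D^a\bar R_{ac})(D_b\bar R^{bc})$. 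The decisive simplification is that the contracted Bianchi identity $\nabla_c R_b^{~c}=\frac{1}{2}\nabla_b R=0$ --- equivalently, on the slice, Eq.~(\ref{eq:contracted_bianchi}) --- lets one trade the residual $D\lambda$ terms and the $t^a t^b$-projected contributions for $D^a\bar R_{ac}$ and $D^a\,{}^{(3)}R$, assembling the cross term $-2\lambda^{1/2}(D^c\bar R_{ca})(D^a\,{}^{(3)}R)$ and eliminating every explicit factor of $D\lambda$, so that $\lambda$ survives only with the benign power $\lambda^{1/2}$ (as needed later for horizons).

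I expect the main obstacle to be precisely this last bookkeeping: commuting $\nabla_c$ past $\nabla_a$ naively regenerates Riemann-tensor terms, and one must check that the commutator, the $\nabla_a t_b$ corrections, and the Gauss/Codacci substitutions conspire --- through $\nabla_c R_b^{~c}=0$ --- to leave only the spatial expressions on the right of Eq.~(\ref{eq:lem_proof}), with all signs and all $\lambda^{-1}$ and $D\lambda$ contributions cancelling. No new conceptual ingredient beyond the already-established identities is required; the difficulty is keeping the index manipulations organized and using $R=0$ consistently to kill trace terms at every stage.
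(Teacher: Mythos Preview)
Your proposal is correct and follows essentially the same route as the paper: start from Eq.~(\ref{eq:10}), contract with the two pieces of $P^{ab}$, decompose the covariant derivatives using staticity and Eqs.~(\ref{eq:Codazzi})--(\ref{eq:Gauss}), integrate by parts, and then invoke the contracted Bianchi identity Eq.~(\ref{eq:contracted_bianchi}) to eliminate the residual $D\lambda$ terms. The only cosmetic difference is that the paper treats the spatial ($h^a_eh^b_f$) and temporal ($t^at^b$) projections separately (obtaining Eqs.~(\ref{eq:10.5}) and (\ref{eq:12})) and adds them at the end, whereas you package both into $P^{ab}$ from the start; your direct evaluation of the algebraic piece $P^{ab}R_{ac}R_b^{~c}$ via the decomposition $R_{ab}=\bar R_{ab}+\lambda^{-1}\,{}^{(3)}R\,t_at_b$ is a minor streamlining of the same computation.
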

\begin{proof}
To show this we project Eq.~(\ref{eq:10}) onto the spatial surface, ${\cal S}$, with $h^a_{~e} h^b_{~f}$ and use the identity
\beq
 h^a_{~e} h^b_{~f} \nabla_c \nabla_a R_b^{~c} &=&  h^a_{~e} h^b_{~f} \left( h^{cd} - \frac{1}{\lambda}t^ct^d\right)
\nabla_c \nabla_a R_{bd}\nonumber \\
&=& D_a D_e \bar{R}^a_{~f} - \frac{1}{4\lambda^2} \left( D_e\lambda\right)\left(D_f\lambda\right) ^{(3)}R
-\frac{1}{4\lambda^2} \left( D_e\lambda\right)\left( D^d\lambda\right) \bar{R}_{fd}
\nonumber \\
&&+ \frac{1}{2\lambda} \left( D_f\lambda\right) \left( D_e ^{(3)}R\right) + \frac{1}{2\lambda}
\left( D^a\lambda\right) \left( D_e\bar{R}_{fa}\right)~,
\eeq
where the second equality comes from the tedious but straight forward algebra coming from
 bringing the $h^b_{~f}t^d$ inside both covariant derivatives and
repeatedly using Eq.~(\ref{eq:0.1}), Eq.~(\ref{eq:0.2}), Eq.~(\ref{eq:Codazzi}) and Eq.~(\ref{eq:Gauss}).

Substituting this expression into Eq.~(\ref{eq:10}), multiplying by $\lambda^{1/2} \bar{R}^{ef}$ and
integrating over ${\cal S}$, one finds,
\beq\label{eq:10.5}
 && \int_{\cal S} \sqrt{h}{\rm d}^3x\left[ \lambda^{1/2} \bar{R}^{ef} h^a_e h^b_f R_{acbd}R^{cd}\right] =
\nonumber \\
&&
\int_{\cal S} \sqrt{h}{\rm d}^3x\Biggl[ D^a\left( -\lambda^{1/2} \bar{R}^{ef}D_e \bar{R}_{af}\right)
+\lambda^{1/2} \left( D_a\bar{R}_{ef}\right)\left( D^e\bar{R}^{af}\right)
+ \frac{1}{4}\lambda^{-3/2} \left.\right. ^{(3)}R \bar{R}^{ef} \left( D_e\lambda\right)\left( D_f\lambda\right)
 \nonumber \\
&&
+\frac{1}{4}\lambda^{-3/2} \bar{R}^{ef}\bar{R}_{fd} \left( D_e \lambda\right) \left( D^d\lambda\right)
-\frac{1}{2}\lambda^{-1/2} \bar{R}^{ef} \left( D_f\lambda\right)\left( D_e^{(3)}R\right) + \lambda^{1/2} \bar{R}^a_{~b}
\bar{R}^b_{~c} \bar{R}^c_{~a}\Biggr]~.
\eeq
Similarly we project Eq.~(\ref{eq:10}) along $t^at^b$ to find,
\beq\label{eq:11}
 t^at^b R^{cd}R_{acbd} &=& -t^at^b\nabla_c\nabla_a R_b^{~c} + R_{ac}R_b^{~c}t^at^b~,\nonumber \\
&=& \frac{1}{2} \left[\left( D_a\bar{R}^a_{~b}\right)\left( D^b\lambda\right) + \bar{R}^{ab} D_aD_b\lambda
+^{(3)}RD_aD^a\lambda\right] - \lambda\left[ ^{(3)}R\right]^2~,
\eeq
where the second equality follows from the staticity of the space-time and again, repeated use of
Eq.~(\ref{eq:0.1}), Eq.~(\ref{eq:0.2}), Eq.(\ref{eq:Codazzi}) and Eq.~(\ref{eq:Gauss}).

Multiplying Eq.~(\ref{eq:11}) by $\lambda^{-1/2}\left.\right.^{(3)}R$ and integrating over ${\cal S}$, one finds,
\beq\label{eq:12}
&& \int_{\cal S} \sqrt{h}{\rm d}^3 x\left[ \lambda ^{-1/2}\left.\right.^{(3)} R t^a t^b R^cd R_{acbd}\right] =
\nonumber \\
&&
\int_{\cal S} \sqrt{h} {\rm d}^3 x\Biggl[ D^a \left( \frac{1}{2} \lambda^{-1/2} \left.\right.^{(3)}R \bar{R}_{ab}
D^b\lambda + \frac{1}{2} \lambda^{-1/2} \left.\right.^{(3)}R ^{(3)}R D^a\lambda\right) \nonumber \\
&&
+ \frac{1}{4} \lambda^{-3/2} \left.\right.^{(3)} \bar{R}^{ab}\left( D_a \lambda\right)\left(
D_b\lambda\right) + \frac{1}{4} \lambda^{-3/2} \left. \right.^{(3)}R ^{(3)}R \left( D^a\lambda\right)
\left(D_a\lambda\right) \nonumber \\
&&
- \frac{1}{2} \lambda^{-1/2} \bar{R}^{ab} \left( D_a ^{(3)}R\right) \left( D_b\lambda\right) 
- \lambda^{1/2} \left. \right. ^{(3)}R \left( D_a^{(3)}R\right)\left( D^a\lambda\right)
-\lambda^{1/2} \left[ ^{(3)}R\right]^3 \Biggr]~.
\eeq
Finally, adding Eq.~(\ref{eq:10.5}) and Eq.~(\ref{eq:12}) and using the contracted Bianchi identity
given in Eq.~(\ref{eq:contracted_bianchi}) demonstrates the desired result, Eq.~(\ref{eq:lem_proof}).
\end{proof}
%
%
\section{Generalised Lichnerowicz theorem for $4^{\rm th}$ order gravity}\label{sec:lich_thm}
Using the lemmas proved in the previous section, we will now extend the Lichnerowicz theorem to
$4^{\rm th}$ order gravity (for $\alpha \neq0$ and $\alpha \geq 3\beta$),
provided the spatial scalar curvature satisfies certain bounds.
\begin{thm}\label{thm:lich}
Consider a static space-time, with a spatial slice that is topologically $\mathbb{R}^3$,
 in which the spatial curvature everywhere
satisfies the following two conditions:
\beq\label{eq:restriction0}
m^2 - ^{(3)}R &\geq& 0~, \nonumber \\
\bar{R}^a_{~b}\bar{R}^b_{~a}\left( m^2 + {\cal R}\right) &\geq& 0~,
\eeq
where ${\cal R}$ is defined as,
\be
 {\cal R} \equiv \frac{ \bar{R}^a_{~b}\bar{R}^b_{~c}\bar{R}^c_{~a}}{ \bar{R}^a_{~b} \bar{R}^b_{~a}}~.
\ee
Let the space-time be asymptotically constant, unless both of the inequalities in Eq.~(\ref{eq:restriction0})
are saturated, in which case let the space-time be asymptotically flat.
Then for such a space-time the equations $R=0$ and 
\be
 \Box R_{ab} - m^2 R_{ab} + \frac{1}{2}g_{ab} R^{cd}R_{cd} - 2R^{cd}R_{acbd} =0~,
\ee
imply $R_{ab}=0$.
\end{thm}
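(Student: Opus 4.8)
The plan is to combine Lemma~\ref{lem:2} and Lemma~\ref{lem:3} exactly as the proof of Lemma~\ref{lem:1} combined Eq.~(\ref{eq:5}) and Eq.~(\ref{eq:6}), but now keeping track of the extra cubic-in-curvature terms and the curvature-squared coefficient. First I would use Theorem~\ref{thm:p1}, valid since $\alpha-3\beta\geq 0$, to set $R=0$, so that the Codacci and Gauss relations Eq.~(\ref{eq:Codazzi}), Eq.~(\ref{eq:Gauss}) hold and Eq.~(\ref{eq:10}) is available. Then I would contract the field equation $\Box R_{ab} - m^2 R_{ab} + \tfrac12 g_{ab}R^{cd}R_{cd} - 2R^{cd}R_{acbd}=0$ with the combination $\lambda^{1/2}\bar R^{ef}h^a_e h^b_f + \lambda^{-1/2}\,{}^{(3)}R\, t^a t^b$, integrate over $\mathcal{S}$, and assemble the pieces: the $\Box R_{ab}-m^2R_{ab}$ part contributes the right-hand side of Eq.~(\ref{eq:7}) (with the $\lambda$-derivatives eliminated via the contracted Bianchi identity Eq.~(\ref{eq:contracted_bianchi})), the $\tfrac12 g_{ab}R^{cd}R_{cd}$ part drops out just as in Lemma~\ref{lem:2} because $\lambda^{1/2}\bar R^{ef}h^a_eh^b_f g_{ab} = \lambda^{1/2}\,{}^{(3)}R$ cancels against the $t^at^b$ piece, and the $-2R^{cd}R_{acbd}$ part contributes $-2$ times the right-hand side of Eq.~(\ref{eq:lem_proof}).

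Having added everything, I expect the boundary terms to integrate to zero under asymptotic constancy, and the bulk integrand to reorganise into a sum of manifestly sign-definite pieces. The gradient terms $-\lambda^{1/2}(D^a\bar R^{cd})(D_a\bar R_{cd})$, $-\lambda^{1/2}(D^a\,{}^{(3)}R)(D_a\,{}^{(3)}R)$ coming from Lemma~\ref{lem:1}/\ref{lem:2}, the gradient terms $+2\lambda^{1/2}(D_a\bar R_{ef})(D^e\bar R^{af})$, $+2\lambda^{1/2}(D^a\bar R_{ac})(D_b\bar R^{bc})$, $-4\lambda^{1/2}(D^c\bar R_{ca})(D^a\,{}^{(3)}R)$ coming from $-2\times$Lemma~\ref{lem:3}, and (after using Eq.~(\ref{eq:contracted_bianchi})) the $\lambda^{-3/2}$ term, must all combine with the $-m^2$ term and the algebraic curvature-cubed remainder $-2\lambda^{1/2}[\bar R^a_{~b}\bar R^b_{~c}\bar R^c_{~a}-({}^{(3)}R)^3]$. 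Using ${}^{(3)}R = \lambda^{-1}t^at^bR_{ab}$ and $\bar R_{ab}$ being the full spatial Ricci, and the fact that on $\mathbb{R}^3$ the Gauss equation ties $({}^{(3)}R)^2$ and $\bar R^a_{~b}\bar R^b_{~a}$ together, the algebraic leftovers should collapse into $-\lambda^{1/2}\bar R^a_{~b}\bar R^b_{~a}(m^2+\mathcal{R})$ plus a multiple of $\lambda^{1/2}({}^{(3)}R)^2(m^2-{}^{(3)}R)$, which are exactly the quantities that the hypotheses Eq.~(\ref{eq:restriction0}) declare non-negative.

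The main obstacle I anticipate is organising the gradient terms: after adding $-2\times$Lemma~\ref{lem:3} one has terms like $+2(D_a\bar R_{ef})(D^e\bar R^{af})$ and $+2(D^a\bar R_{ac})(D_b\bar R^{bc})$ that are not individually sign-definite, and one must show they combine with the negative-definite $-(D^a\bar R^{cd})(D_a\bar R_{cd})$, $-(D^a\,{}^{(3)}R)(D_a\,{}^{(3)}R)$ and the cross term $-4(D^c\bar R_{ca})(D^a\,{}^{(3)}R)$ into something like $-\lambda^{1/2}|D_a\bar R_{bc} - (\text{trace and divergence pieces})|^2$. This is where the $\mathbb{R}^3$ topology is used again — a further integration by parts (commuting spatial derivatives, which on $\mathcal{S}$ generates ${}^{(3)}R$-curvature terms that feed back into the algebraic part) is needed to trade $D^a\bar R_{ac}$ against $D_c\,{}^{(3)}R$ via the $3$-dimensional contracted Bianchi identity $D^a\bar R_{ac}=\tfrac12 D_c\,{}^{(3)}R$, after which the remaining derivative terms should assemble into a perfect square. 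Once the derivative part is a negative-definite perfect square and the algebraic part matches Eq.~(\ref{eq:restriction0}), vanishing of the integrand at every point gives $D_a\bar R_{bc}=0$, $D_a\,{}^{(3)}R=0$, and then (for $m\neq 0$, or for $m=0$ with asymptotic flatness when both inequalities are saturated) $\bar R_{ab}=0$ and ${}^{(3)}R=0$; combined with Eq.~(\ref{eq:Codazzi}) and Theorem~\ref{thm:p1} this forces $R_{ab}=0$.
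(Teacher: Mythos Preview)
Your overall architecture is exactly that of the paper: contract the field equation with $\lambda^{1/2}\bar R^{ef}h^a_{~e}h^b_{~f}+\lambda^{-1/2}\,{}^{(3)}R\,t^at^b$, integrate over ${\cal S}$, feed in Eq.~(\ref{eq:7}) (with Eq.~(\ref{eq:contracted_bianchi})), drop the $\tfrac12 g_{ab}R^{cd}R_{cd}$ piece as in Lemma~\ref{lem:2}, and subtract twice Eq.~(\ref{eq:lem_proof}). The algebraic remainder indeed organises into $-\lambda^{1/2}\bigl[\bar R_{cd}\bar R^{cd}(m^2+{\cal R})+({}^{(3)}R)^2(m^2-{}^{(3)}R)\bigr]$, and the boundary terms vanish as you say.

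The gap is in how you propose to control the derivative terms. You plan to use the three-dimensional contracted Bianchi identity $D^a\bar R_{ac}=\tfrac12 D_c\,{}^{(3)}R$ and further integrations by parts (commuting $D$'s and feeding the resulting curvature terms back into the algebraic part). This does not work as stated: $\bar R_{ab}\equiv h_a^{~c}h_b^{~d}R_{cd}$ is the spatial projection of the \emph{four}-dimensional Ricci tensor, not the intrinsic Ricci tensor ${}^{(3)}R_{ab}$ of $h$, so the three-Bianchi identity does not apply to it (indeed $\bar R_{ab}$ and ${}^{(3)}R_{ab}$ differ by second derivatives of $\lambda$). The paper never integrates by parts again at this stage. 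Instead it groups the divergence/trace pieces directly into the manifest square $-\lambda^{1/2}\bigl[D^a\,{}^{(3)}R-2D^c\bar R^a_{~c}\bigr]\bigl[D_a\,{}^{(3)}R-2D^b\bar R_{ab}\bigr]$, and then disposes of the remaining combination $-\lambda^{1/2}\bigl[(D^a\bar R^{cd})(D_a\bar R_{cd})+2(D_a\bar R_{bc})(D^b\bar R^{ac})\bigr]$ by the purely \emph{pointwise} algebraic inequality Eq.~(\ref{eq:pos_comb}), proved in the Appendix by expanding $T_{abc}T^{abc}+2T_{abc}T^{bac}$ for $T_{abc}=D_a\bar R_{bc}$ in an orthonormal triad of the Riemannian metric $h$. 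That inequality is the missing ingredient in your sketch; without it (and with the spurious three-Bianchi step) your plan for sign-definiteness of the gradient block does not close. Aside from this, note a sign slip in your bookkeeping: $-2\times$Lemma~\ref{lem:3} contributes $-2\lambda^{1/2}(D_a\bar R_{ef})(D^e\bar R^{af})$, $-2\lambda^{1/2}(D^a\bar R_{ac})(D_b\bar R^{bc})$, $+4\lambda^{1/2}(D^c\bar R_{ca})(D^a\,{}^{(3)}R)$, not the signs you wrote.
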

\begin{proof}
The equation,
\be
 \Box R_{ab} - m^2 R_{ab} + \frac{1}{2}g_{ab} R^{cd}R_{cd} - 2R^{cd}R_{acbd} =0~,
\ee
implies,
\be\label{eq:13}
 I\equiv\left( \lambda^{1/2}\bar{R}^{ef} h^a_{~e}h^b_{~f} + \lambda^{-1/2}\left.\right.^{(3)}Rt^at^b  \right)\left(
 \Box R_{ab} - m^2 R_{ab} + \frac{1}{2}g_{ab} R^{cd}R_{cd} - 2R^{cd}R_{acbd}\right) =0~.
\ee
By Lemmas~(\ref{lem:2}) and (\ref{lem:3}) and Eq.~(\ref{eq:7}) and Eq.~(\ref{eq:contracted_bianchi})
of Lemma~(\ref{lem:1}) the integral of Eq.~(\ref{eq:13}) over the constant time slice, ${\cal S}$ is,
\beq\label{eq:14}
 && \int_{\cal S} \sqrt{h}{\rm d}^3 x\left( I\right) = \int_{\cal S} \sqrt{h} {\rm d}^3x \Biggl\{ D^a\left[ \lambda^{1/2}\left(
^{(3)}RD_a^{(3)}R + \bar{R}^{cd}D_a \bar{R}_{cd} - \bar{R}^{ef}D_e \bar{R}_{af} - ^{(3)}RD^b \bar{R}_{ba}\right)\right]
\nonumber \\
&&
-\lambda^{1/2} \left( D^a\bar{R}^{cd}\right)\left( D_a\bar{R}_{cd}\right)
-2\lambda^{1/2} \left( D_a\bar{R}_{bc}\right)\left( D^b \bar{R}^{ac}\right)
-\lambda^{1/2} \left[ D^a\left.\right.^{(3)}R - 2D^c\bar{R}^a_{~c}\right]\left[ D_a^{(3)} -2D^b\bar{R}_{ab}\right]
\nonumber \\
&&
-\lambda^{1/2}\left[\bar{R}_{cd}\bar{R}^{cd} \left( m^2+{\cal R}\right) + \left( ^{(3)}R\right)^2 \left( m^2
-^{(3)}R\right) \right] \Biggr\}=0~.
\eeq

Asymptotic constancy ensures that the boundary terms vanish, whilst the final two terms are negative definite
if the spatial curvature satisfies the properties,
\beq\label{eq:restrictions}
m^2 - ^{(3)}R &\geq& 0~, \nonumber \\
\bar{R}^a_{~b}\bar{R}^b_{~a}\left( m^2 + {\cal R}\right) &\geq& 0~.
\eeq
Finally, as shown in Appendix~(\ref{app}), the combination
\be\label{eq:pos_comb}
 \left( D^a\bar{R}^{cd}\right)\left( D_a\bar{R}_{cd}\right) + 2\left( D_a\bar{R}_{bc}\right)\left( D^b \bar{R}^{ac}\right)\geq0
\ee
for all $\bar{R}_{ab}$. Thus each term in the integrand of Eq.~(\ref{eq:14}) is negative definite and hence required to vanish.
Thus $D_cR_{ab}=0$ and if either of the inequalities in the expressions given in Eq.~(\ref{eq:restrictions}) fail to be saturated
in any open region, Eq.~(\ref{eq:14}) implies $R_{ab}=0$. If the expressions in Eq.~(\ref{eq:restrictions}) are saturated everywhere, then
asymptotic flatness is required to imply $R_{ab}=0$ everywhere.
\end{proof}

The equations of motion of $4^{\rm th}$ order gravity (with $\alpha -3\beta \geq 0$),
imply $R=0$ (by Theorem~(\ref{thm:p1})) and (from Eq.~(\ref{eq:trace_free}))
\be
 \hbar \alpha\left(  \Box R_{ab} - m^2 R_{ab} + \frac{1}{2}g_{ab} R^{cd}R_{cd} - 2R^{cd}R_{acbd}\right) =0~,
\ee
with $m^2 = \gamma \kappa^{-2} \hbar^{-1}\alpha^{-1}$. Thus, provided the spatial curvature everywhere obeys the bounds,
\beq
^{(3)}R &\leq& \frac{\gamma}{\hbar\alpha\kappa^2}  ~, \nonumber \\
{\cal R} &\geq& -\frac{\gamma}{\hbar\alpha\kappa^2}~,
\eeq
there are no, non-trivial, static, asymptotically constant vacuum solutions to $4^{\rm th}$ order gravity (with $\alpha -3\beta \geq0$
and $\alpha \neq 0$).

%
%
\section{Generalised Israel theorem for $4^{\rm th}$ order gravity}\label{sec:israel}
The result proved in Section~(\ref{sec:lich_thm}) demonstrates that there are no static, vacuum solutions
to $4^{\rm th}$ order gravity for space-times which have a spatial section that is topologically $\mathbb{R}^3$,
but one may ask whether a similar result holds for space-times in which the spatial slice contains a boundary.
In particular, in order to consider space-times with a (static) black-hole, we need to allow for a null, interior
boundary to our space-time i.e.\ the equal time hyper-surfaces become null.
At first sight the presence of boundary terms in Eq.~(\ref{eq:14}) would appear
to be a significant difficulty for such a space-time, however on the horizon, the normal to the spatial hyper-surface,
$n_a$ becomes null i.e.\ $n^an_a = \lambda =0$ and hence the boundary terms vanish.
Thus, if the space-time is asymptotically constant (which ensures the boundary terms at
infinity also vanish) and all interior boundaries are null, the result of Section~(\ref{sec:lich_thm}) will
still apply.
\begin{thm}\label{thm:israel}
Consider a static space-time, in which the spatial curvature everywhere satisfies the following two conditions:
\beq\label{eq:restrictions1}
^{(3)}R &\leq& \frac{\gamma}{\hbar\alpha\kappa^2}  ~, \nonumber \\
{\cal R} &\geq& -\frac{\gamma}{\hbar\alpha\kappa^2}~,
\eeq
Let the space-time be asymptotically constant, unless both of the inequalities in Eq.~(\ref{eq:restrictions1})
are saturated, in which case let the space-time be asymptotically flat. Let the spatial slice be bounded by a
null surface that is topologically $\mathbb{S}^2$.
Then the only solution to $4^{\rm th}$ order gravity (with $\alpha\neq 0$ and $\alpha - 3\beta \geq 0$),
 in the region exterior to the null surface, is $R_{ab}=0$.
\end{thm}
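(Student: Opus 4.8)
The plan is to re-run the argument of Theorem~\ref{thm:lich} essentially verbatim, the only genuinely new ingredient being the treatment of the inner boundary of the spatial slice. First I would apply Theorem~\ref{thm:p1} to the trace equation Eq.~(\ref{eq:trace}): since $\alpha-3\beta\geq 0$ the effective mass-squared $\gamma\kappa^{-2}/[2(\alpha-3\beta)]$ is non-negative, so staticity together with the assumed asymptotic behaviour forces $R=0$ throughout the exterior region. With $R=0$ the equation of motion Eq.~(\ref{eq:EoM}) reduces to the trace-free equation Eq.~(\ref{eq:trace_free}), i.e.\ $\Box R_{ab} - m^2 R_{ab} + \frac{1}{2}g_{ab}R^{cd}R_{cd} - 2R^{cd}R_{acbd}=0$ with $m^2 = \gamma\kappa^{-2}\hbar^{-1}\alpha^{-1}$, which is precisely the hypothesis of Theorem~\ref{thm:lich}, and the curvature conditions Eq.~(\ref{eq:restrictions1}) are exactly Eq.~(\ref{eq:restriction0}) rewritten with this value of $m^2$.

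Next I would contract this equation with the same test tensor $\lambda^{1/2}\bar{R}^{ef}h^a_{~e}h^b_{~f} + \lambda^{-1/2}\,{}^{(3)}R\,t^a t^b$ used in Theorem~\ref{thm:lich} and integrate over the spatial slice ${\cal S}$, which is now an asymptotically flat region whose inner boundary is the null surface $\Sigma\cong\mathbb{S}^2$. Invoking Lemmas~\ref{lem:1}--\ref{lem:3} exactly as before produces the identity Eq.~(\ref{eq:14}); the crucial point is that, because the contracted Bianchi identity Eq.~(\ref{eq:contracted_bianchi}) has already been used to eliminate every derivative of $\lambda$, all terms of the integrand — including the total-divergence term — carry only non-negative powers of $\lambda$, schematically $D^a[\lambda^{1/2}(\cdots)]$. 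Applying the divergence theorem then produces a boundary integral at infinity (killed by asymptotic constancy, just as in Theorem~\ref{thm:lich}) plus a boundary integral over $\Sigma$.

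The heart of the argument, and the step I expect to be the main obstacle, is showing that the $\Sigma$ boundary term vanishes. On a Killing horizon the norm of the timelike Killing field degenerates, $\lambda = -t^a t_a \to 0$, so the hypersurface normal $n_a$ becomes null there; since the boundary integrand is $\lambda^{1/2}$ times combinations of $\bar{R}_{ab}$, ${}^{(3)}R$ and their first spatial derivatives — quantities that remain bounded as one approaches $\Sigma$, this being exactly where the curvature bounds Eq.~(\ref{eq:restrictions1}) and the regularity of the horizon enter — the $\lambda^{1/2}\to 0$ prefactor forces the contribution to zero. Had the $\lambda^{-1/2}$ and $\lambda^{-3/2}$ pieces of Eqs.~(\ref{eq:5})--(\ref{eq:6}) not been absorbed via the Bianchi identity this step would fail, which is precisely the simplification flagged as crucial for space-times with null horizons. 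To make this rigorous one should realise ${\cal S}$ as a limit of timelike slices ${\cal S}_\epsilon$ with $\lambda\geq\epsilon>0$, establish Eq.~(\ref{eq:14}) on each ${\cal S}_\epsilon$, and check that the $\Sigma_\epsilon$ boundary term tends to zero as $\epsilon\to 0$.

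With both boundary contributions gone, Eq.~(\ref{eq:14}) reduces to the integral of a sum of manifestly non-positive terms: $-\lambda^{1/2}$ times the non-negative combination of Eq.~(\ref{eq:pos_comb}), the contracted square $-\lambda^{1/2}[D^a{}^{(3)}R - 2D^c\bar{R}^a_{~c}][D_a{}^{(3)}R - 2D^b\bar{R}_{ab}]$, and $-\lambda^{1/2}[\bar{R}_{cd}\bar{R}^{cd}(m^2+{\cal R}) + ({}^{(3)}R)^2(m^2 - {}^{(3)}R)]$, which is non-positive precisely under Eq.~(\ref{eq:restrictions1}). Hence every term vanishes pointwise: $D_c\bar{R}_{ab}=0$ and $D_c{}^{(3)}R=0$. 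If either inequality in Eq.~(\ref{eq:restrictions1}) is strict on some open set one concludes $\bar{R}_{ab}={}^{(3)}R=0$ directly; if both are saturated everywhere, asymptotic flatness propagates the vanishing in from infinity. Together with the Codazzi and Gauss relations Eqs.~(\ref{eq:Codazzi})--(\ref{eq:Gauss}) and $R=0$ this gives $R_{ab}=0$ throughout the exterior region, as claimed — at which point staticity, asymptotic flatness and the $\mathbb{S}^2$ topology of $\Sigma$ let one invoke the classical Israel theorem to identify the exterior geometry with Schwarzschild.
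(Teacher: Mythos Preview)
Your proposal is correct and follows essentially the same approach as the paper, which simply restates Eq.~(\ref{eq:14}) with $m^2=\gamma\kappa^{-2}\hbar^{-1}\alpha^{-1}$, observes that the inner boundary term vanishes because $\lambda=0$ on the null surface while the outer one vanishes by asymptotic constancy, and then repeats the sign-definiteness argument of Theorem~\ref{thm:lich}. Your discussion of the limiting procedure and of the role of the Bianchi identity in eliminating negative powers of $\lambda$ is in fact more careful than what the paper itself provides.
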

\begin{proof}
As in Theorem~\ref{thm:lich}, the equations of motion for $4^{\rm th}$ order gravity imply, (Eq.~(\ref{eq:14}))
\beq\label{eq:15}
 && \int \sqrt{h}{\rm d}^3 x\left( I\right) = \int \sqrt{h} {\rm d}^3x \Biggl\{ D^a\left[ \lambda^{1/2}\left(
^{(3)}RD_a^{(3)}R + \bar{R}^{cd}D_a \bar{R}_{cd} - \bar{R}^{ef}D_e \bar{R}_{af} - ^{(3)}RD^b \bar{R}_{ba}\right)\right]
\nonumber \\
&&
-\lambda^{1/2} \left( D^a\bar{R}^{cd}\right)\left( D_a\bar{R}_{cd}\right)
-2\lambda^{1/2} \left( D_a\bar{R}_{bc}\right)\left( D^b \bar{R}^{ac}\right)
-\lambda^{1/2} \left[ D^a\left.\right.^{(3)}R - 2D^c\bar{R}^a_{~c}\right]\left[ D_a^{(3)} -2D^b\bar{R}_{ab}\right]
\nonumber \\
&&
-\lambda^{1/2}\left[\bar{R}_{cd}\bar{R}^{cd} \left( \gamma \kappa^{-2} \alpha^{-1}+{\cal R}\right) + \left( ^{(3)}R\right)^2
\left( \gamma \kappa^{-2} \alpha^{-1} - ^{(3)}R\right) \right] \Biggr\}=0~.
\eeq

At infinity the boundary terms vanish because the spatial section is asymptotically constant, whilst on the interior
boundary they vanish because $\lambda=0$. Thus, just as in Theorem~(\ref{thm:lich}),
provided the $3$-dimensional curvature satisfies the conditions
 \beq
  \frac{\gamma}{ \hbar \alpha \kappa^{2}  } - ^{(3)}R &\geq& 0~, \nonumber \\
  \frac{\gamma}{ \hbar \alpha \kappa^{2} } + {\cal R} &\geq& 0~, \nonumber
 \eeq
each term in the integrand of Eq.~(\ref{eq:15}) is negative definite and hence vanishes. If either of these
inequalities is not saturated everywhere, asymptotic constancy is sufficient to imply $R_{ab}=0$, otherwise
asymptotic flatness is required.
\end{proof}

Since $4^{\rm th}$ order gravity implies $R_{ab}=0$ (for asymptotically constant and static space-times), provided
 \beq\label{eq:restrictions3}
 ^{(3)}R &\leq&  \frac{\gamma}{ \hbar \alpha \kappa^{2} }   ~, \nonumber \\
 {\cal R} &\geq& - \frac{\gamma}{ \hbar \alpha \kappa^{2}}~,
 \eeq
even in the presence of null boundaries, it follows that the only spherically symmetric solution is the (exterior of the)
Schwarzschild metric.

It is important to note here that this result relies on the existence of a null boundary
to the equal time hyper-surfaces and the existence of a vacuum ($T_{ab}=0$) everywhere between this horizon
and infinity. This is true for a (static, vacuum) space-time containing an event horizon, however the result
cannot be applied to the exterior of a general spherical object that does not contain such a horizon. In particular
this result to does not imply that the metric outside a spherically symmetric distribution of matter will be
Schwarzschild, unless that matter is entirely contained within the event horizon. This agrees with the (perturbative)
results of (for example)~\cite{Stelle:1977ry}.
%
%
\section{Non-vacuum solutions}\label{sec:non_vac}
In Sections~(\ref{sec:lich_thm}) and (\ref{sec:israel}) we showed that for $T_{ab}=0$, the static, asymptotically
constant solutions of GR are also solutions to $4^{\rm th}$ order gravity (with $\alpha \neq 0$ and $\alpha - 3\beta \geq 0$),
 at least in the case of space-times that have a spatial topology of $\mathbb{R}^3$ (Theorem~(\ref{thm:lich})) or space-times that
have internal null boundaries (Theorem~(\ref{thm:israel})), and whose spatial curvature
everywhere satisfies Eq.~(\ref{eq:restrictions3}). A natural question that one may ask is whether this
connection between the solutions of GR and $4^{\rm th}$ order gravity extends to the non-vacuum case i.e.\ $T_{ab}\neq 0$.
We will now demonstrate that it does not. In fact below we show that for, static, asymptotically flat space-times containing
a barotropic fluid with equation of state $\omega \geq -3^{-1/3}\approx -0.6933$, 
{\it only} the vacuum solutions of the two theories coincide, with all other solutions being distinct.
\begin{thm}
 Consider an asymptotically flat solution to GR, $^{\rm GR}g_{ab}$, with matter satisfying,
\be
 \int \sqrt{h} {\rm d}^3 x T^a_{~b}T^b_{~c}T^c_{~a} \geq 0~.
\ee 
Then $^{\rm GR}g_{ab}$ is a solution to $4^{\rm th}$ order gravity (with $\alpha \neq 0$ and $\alpha \neq  3\beta $)
iff $R_{ab}=0$.
\end{thm}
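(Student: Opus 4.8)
The plan is to exploit the fact that, since $^{\rm GR}g_{ab}$ already satisfies the Einstein equation $R_{ab}-\frac{1}{2}g_{ab}R=\frac{\kappa^2}{2\gamma}T_{ab}$, feeding it into Eq.~(\ref{eq:EoM}) with the \emph{same} $T_{ab}$ shows that it solves $4^{\rm th}$ order gravity if and only if the purely higher-order part of $H_{ab}$ vanishes,
\be\label{eq:Htilde}
 \left(\alpha-2\beta\right)R_{;a;b}-\alpha\Box R_{ab}-\frac{1}{2}\left(\alpha-4\beta\right)g_{ab}\Box R+2\alpha R^{cd}R_{acbd}-2\beta RR_{ab}-\frac{1}{2}g_{ab}\left(\alpha R^{cd}R_{cd}-\beta R^2\right)=0~.
\ee
The ``if'' direction is then immediate: $R_{ab}=0$ makes Eq.~(\ref{eq:Htilde}) hold identically and, by the trace of the Einstein equation, forces $T_{ab}=0$, so $^{\rm GR}g_{ab}$ is simply a common vacuum solution of both theories. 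For the ``only if'' direction I would first take the trace of Eq.~(\ref{eq:Htilde}): exactly as in the passage from Eq.~(\ref{eq:EoM}) to Eq.~(\ref{eq:trace}) the quadratic-curvature pieces cancel, leaving a nonzero multiple of $\left(\alpha-3\beta\right)\Box R$, so $\alpha\neq3\beta$ gives $\Box R=0$, and then asymptotic flatness (so that $R\to0$ at infinity) together with Theorem~(\ref{thm:p1}) forces $R=0$ throughout.

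With $R=0$ (hence $\nabla_aR=0$ and $\Box R=0$) and $\alpha\neq0$, Eq.~(\ref{eq:Htilde}) reduces to
\be\label{eq:tf0}
 \Box R_{ab}-2R^{cd}R_{acbd}+\frac{1}{2}g_{ab}R^{cd}R_{cd}=0~,
\ee
which is the trace-free equation of Theorem~(\ref{thm:lich}) with the mass parameter set to zero. The derivations of Lemmas~(\ref{lem:1})--(\ref{lem:3}) and of Eq.~(\ref{eq:14}) only use staticity, the vanishing of the extrinsic curvature of ${\cal S}$, the Gauss--Codazzi relations Eqs.~(\ref{eq:Codazzi})--(\ref{eq:Gauss}) and the contracted Bianchi identity --- all of which hold for a static solution once $R=0$ --- and never $R_{ab}=0$ itself; so contracting Eq.~(\ref{eq:tf0}) with $\left(\lambda^{1/2}\bar R^{ef}h^a_{~e}h^b_{~f}+\lambda^{-1/2}\,{}^{(3)}R\,t^at^b\right)$ and integrating over ${\cal S}$ reproduces Eq.~(\ref{eq:14}) with $m^2=0$.

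The crux is to read off what the ``potential'' term in Eq.~(\ref{eq:14}) becomes now that $R_{ab}$ need not vanish. By Codazzi, Eq.~(\ref{eq:Codazzi}), the mixed time--space components of $R_{ab}$ vanish, so $R^a_{~b}=\bar R^a_{~b}+\lambda^{-1}t^at_b\,{}^{(3)}R$ (using Gauss, Eq.~(\ref{eq:Gauss})); since $\bar R^a_{~b}$ is spatial while $\lambda^{-1}t^at_b$ is built from $t$, every cross term in the cube drops out and $R^a_{~b}R^b_{~c}R^c_{~a}=\bar R^a_{~b}\bar R^b_{~c}\bar R^c_{~a}-\left({}^{(3)}R\right)^3$, which is exactly the combination $\bar R_{cd}\bar R^{cd}\,{\cal R}-\left({}^{(3)}R\right)^3$ multiplying $\lambda^{1/2}$ in Eq.~(\ref{eq:14}) at $m=0$. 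Finally, $R=0$ turns the Einstein equation into $R^a_{~b}=\frac{\kappa^2}{2\gamma}T^a_{~b}$ (its trace giving $T=0$), so that term equals $\left(\frac{\kappa^2}{2\gamma}\right)^3T^a_{~b}T^b_{~c}T^c_{~a}$. Discarding the boundary term by asymptotic flatness, Eq.~(\ref{eq:14}) then says that a sum of manifestly non-negative gradient terms --- $\lambda^{1/2}$ times the positive combination of Eq.~(\ref{eq:pos_comb}), plus $\lambda^{1/2}$ times the squared spatial norm of $D^a\,{}^{(3)}R-2D^c\bar R^a_{~c}$ --- equals $-\left(\frac{\kappa^2}{2\gamma}\right)^3\int_{\cal S}\sqrt{h}\,{\rm d}^3x\,\lambda^{1/2}T^a_{~b}T^b_{~c}T^c_{~a}$, which the matter hypothesis (read with the natural weight $\sqrt{h}\,\lambda^{1/2}$) makes non-positive. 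Hence both sides vanish, every integrand vanishes, so $D_a\bar R_{bc}=0$ and $D_a\,{}^{(3)}R=0$, and asymptotic flatness then forces $\bar R_{ab}=0$ and ${}^{(3)}R=0$, so by the decomposition above $R_{ab}=0$.

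I expect the main obstacle to be this last identification --- that the cubic spatial invariant sitting in Eq.~(\ref{eq:14}) is precisely the four-dimensional $R^a_{~b}R^b_{~c}R^c_{~a}$, and hence proportional to $T^a_{~b}T^b_{~c}T^c_{~a}$ --- since that is what lets the positivity hypothesis on $T_{ab}$ do the job that the spatial-curvature bounds did in the vacuum theorems. It requires checking that Codazzi really annihilates the mixed components of $R_{ab}$ and, more tediously, that nothing in Lemmas~(\ref{lem:1})--(\ref{lem:3}) or in Eq.~(\ref{eq:14}) quietly used $R_{ab}=0$ rather than merely $R=0$ and $K_{ab}=0$; a minor additional point is to state precisely how the weight $\lambda^{1/2}$ enters the positivity condition imposed on the matter.
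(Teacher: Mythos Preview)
Your proposal is correct and follows essentially the same route as the paper: reduce to Eq.~(\ref{eq:Htilde}), take the trace and invoke Theorem~(\ref{thm:p1}) to obtain $R=0$, feed the resulting $m^2=0$ trace-free equation into the integral identity Eq.~(\ref{eq:14}), and then recognise the residual potential term $\bar R^a_{~b}\bar R^b_{~c}\bar R^c_{~a}-\left({}^{(3)}R\right)^3$ as the full $R^a_{~b}R^b_{~c}R^c_{~a}$ via Gauss--Codazzi, so that the Einstein equation converts it to a positive multiple of $T^a_{~b}T^b_{~c}T^c_{~a}$ and the matter hypothesis forces every term to vanish. Your caveat about the $\lambda^{1/2}$ weight is well taken --- the paper's own statement carries the same imprecision, writing the hypothesis as $\int\sqrt{h}\,{\rm d}^3x\,T^a_{~b}T^b_{~c}T^c_{~a}\geq0$ while the integrand in Eq.~(\ref{eq:16}) actually comes with a factor of $\lambda^{1/2}$.
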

\begin{proof}
Consider a solution to the Einstein's equations:
\be\label{eq:proof_einstein}
 \gamma \kappa^{-2} \left( R_{ab} - \frac{1}{2}g_{ab}R\right) = \frac{1}{2} T_{ab}~,
\ee
that is static and asymptotically flat. Substituting this into the equations of motion for $4^{\rm th}$ order
gravity, Eq.~(\ref{eq:EoM}) we find,
\beq\label{eq:proof_EoM}
&&\left( \alpha - 2\beta\right) R_{;a;b} - \alpha \Box R_{ab} - \frac{1}{2}\left( \alpha -4\beta\right)
g_{ab}\Box R + 2\alpha R^{cd}R_{acbd} \nonumber \\
&&- 2\beta R R_{ab} - \frac{1}{2} g_{ab} \left( \alpha R^{cd}R_{cd}
- \beta R^2\right) = 0~.
\eeq
The trace of Eq.~(\ref{eq:proof_EoM}) is,
\be
 \left( \alpha - 3\beta\right) \Box R = 0~.
\ee
Then by Theorem~(\ref{thm:p1}), for $\alpha-3\beta \neq 0$, this implies $R=0$.
Thus Eq.~(\ref{eq:proof_EoM}) becomes,
\be
 -\alpha\left[ \Box R_{ab} +\frac{1}{2}g_{ab} R^{cd}R_{cd} - 2R^{cd}R_{acbd}\right]=0~,
\ee
which by Theorem~(\ref{thm:lich}) implies (see Eq.~(\ref{eq:14})),
\beq\label{eq:16}
 && \int \sqrt{h}{\rm d}^3 x\left( I\right) = \int \sqrt{h} {\rm d}^3x \Biggl\{ D^a\left[ \lambda^{1/2}\left(
^{(3)}RD_a^{(3)}R + \bar{R}^{cd}D_a \bar{R}_{cd} - \bar{R}^{ef}D_e \bar{R}_{af} - ^{(3)}RD^b \bar{R}_{ba}\right)\right]
\nonumber \\
&&
-\lambda^{1/2} \left( D^a\bar{R}^{cd}\right)\left( D_a\bar{R}_{cd}\right)
-2\lambda^{1/2} \left( D_a\bar{R}_{bc}\right)\left( D^b \bar{R}^{ac}\right)
-\lambda^{1/2} \left[ D^a\left.\right.^{(3)}R - 2D^c\bar{R}^a_{~c}\right]\left[ D_a^{(3)} -2D^b\bar{R}_{ab}\right]
\nonumber \\
&&
-\lambda^{1/2}\left[\bar{R}^a_{~b}\bar{R}^b_{~c}\bar{R}^c_{~a}  - \left( ^{(3)}R\right)^3 \right] \Biggr\}=0~.
\eeq
Thus, if
\be
 \int \sqrt{h}{\rm d}^3 x\left[\bar{R}^a_{~b}\bar{R}^b_{~c}\bar{R}^c_{~a}  - \left( ^{(3)}R\right)^3 \right] \geq 0~,
\ee
then each term in the integrand is negative definite and hence vanishes.

However it is easy to show that,
\beq\label{eq:R3_positive}
 R^a_{~b}R^b_{~c}R^c_{~a} &=& g^{ad}g^{be}g^{cf}R_{bd}R_{ce}R_{fa}~, \nonumber \\
&=& \bar{R}^a_{~b}\bar{R}^b_{~c}\bar{R}^c_{~a}  - \left( ^{(3)}R\right)^3~,
\eeq
where the second equality follows from decomposing $g^{ab}=h^{ab}-\frac{1}{\lambda} t^at^b$, using the Codacci and Gauss
equations (Eq.~(\ref{eq:Gauss}) and Eq.~(\ref{eq:Codazzi})) and noting that $R=0$.
Thus each term in the integrand of Eq.~(\ref{eq:16}) is negative definite, and
hence vanishes, provided,
\be
 \int \sqrt{h}{\rm d}^3 R^a_{~b}R^b_{~c}R^c_{~a} \geq 0~.
\ee
Using Eq.~(\ref{eq:proof_einstein}) and
the fact that $R=0$ for these solutions, this condition can be written as
\be\label{eq:cond}
 \int \sqrt{h} {\rm d}^3 x T^a_{~b}T^b_{~c}T^c_{~a}\geq 0~.
\ee

Thus, for matter satisfying Eq.~(\ref{eq:cond}), a static, asymptotically flat solution to GR is also
a solution to $4^{\rm th}$ order gravity {\it iff} $R_{ab}=0$, which,
by Theorems~(\ref{thm:lich}) and (\ref{thm:israel}) corresponds to the (static, asymptotically
flat) vacuum solutions in both GR and $4^{\rm th}$ order gravity.
Thus, other than the vacuum, there are no static, asymptotically flat solutions to GR, with matter satisfying Eq.~(\ref{eq:cond}),
that are also solutions to $4^{\rm th}$ order gravity.
\end{proof}

If we consider a perfect fluid the energy momentum tensor can be written as
$T^a_{~b}= \rho u^a u_b + P\left( \delta^a_{~b}  + u^a u_b\right)$, with $u_a$ a unit time-like vector tangent to the
observer's world-line, $P$ the fluid's pressure and $\rho$ its energy density. Eq.~(\ref{eq:cond}) then becomes,
\be
 \int \sqrt{h}{\rm d}^3x\left( \rho^3 + 3P^3\right)\geq 0~,
\ee
Assuming the strong energy condition holds, $\rho \geq 0$, then this condition is trivial for
all matter with positive pressure. If we consider a barotropic fluid with equation of state parameter
$\omega$ i.e.\  $P=\omega \rho$ then Eq.~(\ref{eq:cond}) reduces to,
\be
 \int \sqrt{h} {\rm d}^3 x \left( 1+3\omega^3 \right) \geq 0~.
\ee
This is, in particular, satisfied for $\omega \geq -3^{-1/3}\approx -0.6933$ everywhere, and hence 
for dust ($\omega=0$) and radiation ($\omega = 1/3$) fluids.

Thus for a static, asymptotically flat, space-time containing a single, barotropic fluid for which
$\omega \geq - 3^{-1/3}$ everywhere, no solutions to GR remain (static and asymptotically flat) solutions to 
$4^{\rm th}$ order gravity.

One can also consider, for example, the energy-momentum tensor of a Maxwell field,
\be
 T^a_{~b} = \frac{1}{4\pi}\left( F^a_{~c}F^c_{~b} - \frac{1}{4}g^a_{~b} F^d_{~e}F^e_{~d}\right)~,
\ee
with $F_{ab}$ the electromagnetic field tensor. For such a field, one can directly check that 
\be
 T^a_{~b}T^b_{~c}T^c_{~a} = \frac{1}{64\pi^3} \left[ \frac{1}{8} \left( {\rm Tr} {\bf F}^2\right)^3
- \frac{3}{4}\left( {\rm Tr} {\bf F}^4\right) \left( {\rm Tr} {\bf F}^2\right)
- {\rm Tr}{\bf F}^6\right] \geq 0~,
\ee
where the matrix ${\bf F}$ is given by $F^a_{~b}$ and hence is anti-symmetric and the final inequality
follows from direct computation, for a general anti-symmetric matrix.

%
%
\section{Conclusions}\label{sec:conc}
In the Sections~(\ref{sec:lich_thm}) and (\ref{sec:israel}) we demonstrated that there are no, non-trivial,
static, asymptotically constant solutions to $4^{\rm th}$ order gravity, {\it provided} the $3$-dimensional
curvature satisfies the following two conditions:
 \be
 ^{(3)}R  \leq  \frac{\gamma}{ \hbar\alpha \kappa^{2}} ~~~~{\rm and}~~~~ {\cal R} \geq - \frac{\gamma}{\hbar\alpha \kappa^{2}}~,
 \ee
everywhere outside a null horizon. The meaning of these conditions is the following. We require that the 
$3$-dimensional scalar curvature, measured by both $^{(3)}R$ and ${\cal R}$, to be everywhere smaller in magnitude than
the scale set by the ratio between $\gamma \kappa^{-2}$ and $\hbar\alpha$. This is exactly the scale at which corrections
to General Relativity will become significant. If these corrections are motivated by quantum corrections to
general relativity then this will be the quantum gravity scale. Thus our results can interpreted as follows:
there are no, non-trivial, static, asymptotically constant, vacuum solutions to $4^{\rm th}$ order gravity, so long as
the $3$-dimensional scalar curvature is everywhere less than the quantum gravity scale. In particular for
black holes, this implies that the Schwarzschild solution is the {\it unique}, spherically symmetric, static, asymptotically constant
vacuum solution to $4^{\rm th}$ order gravity, unless the $3$-dimensional scalar curvature exceeds the quantum
gravity scale {\it outside the horizon}, a condition that is easily achieved for macroscopic black-holes.

This also demonstrates the fact that the `no-hair' theorem applies to $4^{\rm th}$ order gravity provided that
the spatial curvature (outside the horizon) is everywhere less than the quantum gravity scale. Note however that
the converse does not follow i.e.\ if the spatial curvature outside the horizon is greater than the quantum
gravity scale, Theorem~(\ref{thm:israel}) {\it does not} imply that there are additional (static, asymptotically flat)
solutions.

Finally, in Section~(\ref{sec:non_vac}), we demonstrated that for matter satisfying the rather mild condition,
\be
 \int \sqrt{h} {\rm d}^3 x T^a_{~b}T^b_{~c}T^c_{~a}\geq 0~,
\ee
except for the vacuum, 
none of the static, asymptotically flat solutions to GR are solutions to $4^{\rm th}$ order gravity. In particular
this includes space-times containing single, barotropic fluids with equations of state everywhere satisfying, $\omega \geq - 3^{-1/3}$,
such as relativistic and non-relativistic Baryons and Fermions. A specific application of this is that
the GR solutions for (static, asymptotically flat) stars will fail to be solutions in $4^{\rm th}$ order gravity
(for the same matter distribution). This agrees with the results of~\cite{Stelle:1977ry}, which gives
the solutions to $4^{\rm th}$ order gravity for a particular, extended, spherically symmetric, system (calculated perturbatively
and close to the origin) which is shown not to be Schwarzschild.

\section*{Acknowledgments}

We would like to thank Abhay Ashtekar for extremely helpful discussions and for suggesting the topic.
This work was supported in part by NSF grants PHY0748336,
PHY0854743, The George A.\ and Margaret M.~Downsbrough Endowment and
the Eberly research funds of Penn State.

\section*{Appendix}\label{app}
In this appendix we prove various useful (if rather laborious) identities. First consider
$t^a\nabla_a t^b\nabla_b R_{cd}$, for a static space-time. Using the staticity condition,
\beq\label{eq:app_stat}
 {\cal L}_t R_{ab} &=&0~, \nonumber \\
\Rightarrow t^a\nabla_a R_{cd} &=& -2 R_{a(d} \nabla_{c)} t^a~,
\eeq
we can write,
\beq\label{eq:app1}
 t^a \nabla_a t^b \nabla_b R_{cd} &=& 2 \left [\left(\nabla_b t^a\right)R_{a(d}\left( \nabla_{c)}t^b\right)
   + R_{ab}\left( \nabla_{(d}t^a\right)\left( \nabla_{c)}t^b\right) \right] \nonumber \\
&& - R_{bd} t^a\nabla_a\nabla_c t^b - R_{cb} t^a\nabla_a \nabla_d t^b~.
\eeq
Using $t^c\nabla_c\lambda =0$ and hence that $\nabla_a\left( t^c\nabla_c\lambda\right)=0$, we find
\be
 t^c\nabla_c\nabla_a \lambda = -\left( \nabla_a t^c\right) \left( \nabla_c\lambda\right)~.
\ee
Thus,
\beq
 t^c\nabla_c\nabla_at^d &=& \frac{1}{2\lambda} \left( t^c t^d\nabla_c\nabla_a\lambda - t_a t^c\nabla_c\nabla^d\lambda\right)
\nonumber \\
&=& \frac{1}{2\lambda} \left( -t^d\left( \nabla_at^c\right) \left( \nabla_c\lambda\right) + t_a\left( \nabla^d t^c\right)
\left( \nabla_c\lambda\right) \right) =0~,
\eeq
where the final equality follows from expanding $\nabla_at_c = \frac{1}{\lambda} \left( t_{[c}\nabla_{a]}\lambda\right)$.
Using this in Eq.~(\ref{eq:app1}) gives,
\be\label{eq:app1.5}
 t^a\nabla_at^b\nabla_b R_{cd} = 2\left[ \left( \nabla_b t^a\right) \left( R_{a(d} \nabla_{c)} t^b\right) + R_{ab} 
\left( \nabla_c t^b\right)\left( \nabla_dt^a\right) \right]~,
\ee
which is Eq.~(\ref{eq:1}).

By using $\nabla_at_c = \frac{1}{\lambda} \left( t_{[c}\nabla_{a]}\lambda\right)$ and recalling that
 $t^a\nabla_a\lambda=0$, we find,
\beq
 \left( \nabla_b t^c\right)\left( \nabla_a t^d\right) &=& \frac{1}{4\lambda^2} \left( t^c\nabla_d\lambda - t_d\nabla^c\lambda\right)
\left( t^d\nabla_a\lambda - t_a\nabla^d\lambda\right)~, \nonumber \\
&=& \frac{-1}{4\lambda^2} t^c t_a\left( \nabla_d\lambda\right)\left( \nabla^d\lambda\right)
+\frac{1}{4\lambda} \left( \nabla^c\lambda\right)\left( \nabla_a\lambda\right)~,
\eeq
which is just Eq.~(\ref{eq:2}).

In order to show Eq.~(\ref{eq:3}) note that,
\be\label{eq:app2}
 \frac{1}{\lambda} t^c t^d \nabla_b R_{cd} = \nabla_b\left( \frac{1}{\lambda}t^ct^dR_{cd}\right) - R_{cd}\nabla
\left( \frac{1}{\lambda}t^ct^d\right)~.
\ee
Now we use the fact that for a static space time, the extrinsic curvature of the constant time hyper-surfaces vanishes
(this can be checked by direct computation). Thus the Gauss equation (Eq.~(\ref{eq:Gauss})) gives,
\be
 \frac{1}{\lambda} t^a t^b R_{ab}= ^{(3)}R-R~,
\ee
whilst the Codacci equation,(Eq.~(\ref{eq:Codazzi})) is simply $t^a h^b_{~c} R_{ab}=0$. A consequence of
the Codacci equation and $t^c\nabla_c \lambda =0$ is,
\be
 t^a\left( \nabla^b\lambda\right) R_{ab} = 0~,
\ee
Using this and the Gauss equation in Eq.~(\ref{eq:app2}) gives,
\be\label{eq:app3}
 \frac{1}{\lambda} t^c t^d\nabla_b R_{cd} = \nabla_b\left( ^{(3)}R - R\right)~.
\ee
With this we directly find Eq.~(\ref{eq:3}):
\be
 \frac{1}{2\lambda^2} h^{ab}\left( \nabla_a\lambda\right) t^c t^d\nabla_b R_{cd} = \frac{1}{2\lambda}\left( \nabla^a\lambda\right)
\left( \nabla_a\left( ^{(3)}R - R\right)\right)~.
\ee

Similarly, by noting,
\be
 \frac{1}{\lambda} t^c t^d h^{ab} \nabla_a h^e_{~b} \nabla_e R_{cd} = h^ab \nabla_a \left( \frac{1}{\lambda} t^c t^d h^e_{~b}
\nabla_e R_{cd}\right) - h^{ab}\left[ \nabla_a\left( \frac{1}{\lambda} t^c t^d\right)\right]\nabla_b R_{cd}~,
\ee
and using Eq.~(\ref{eq:app3}) we find,
\be
 \frac{1}{\lambda} t^c t^d h^{ab} \nabla_a h^e_{~b} \nabla_e R_{cd} = D_a D^a\left( ^{(3)}R -R\right)~,
\ee
which is Eq.~(\ref{eq:4}).

A key equation of Lemma~(\ref{lem:1}) is Eq.~(\ref{eq:3.5}), which we derive here. By decomposing $g_{ab} = h_{ab} - \frac{1}{\lambda}
t_at_b$ we have,
\be\label{eq:app5}
\Box R_{ab} - m^2 R_{ab} = h^{ab}\nabla_a\nabla_b R_{cd} - \frac{1}{\lambda} t^a t^b \nabla_a \nabla_bR_{cd} -m^2 R_{cd}=0~.
\ee
Splitting the projection $h^{ab} = h^{ac}h^b_{~c}$ and bringing one of the $3$-metrics, 
inside the covariant derivative the first term becomes
\be\label{eq:app6}
 h^{ab} \nabla_a \nabla_b R_{cd} = h^{ab} \nabla_a h^e_{~b} \nabla_e R_{cd}~,
\ee
where we used the fact that because $\nabla_a g_{bc}=0$, we have 
\be
 \nabla_a h_{bc} = \nabla_a\left( \frac{1}{\lambda} t_bt_c\right)~,
\ee
and $h^{ab}\nabla_a t_b=0$ by symmetry. For the second term in Eq.~(\ref{eq:app5}), we again
use the fact that $ \nabla_c g_{ab}=0$ to find,
\beq
 \nabla_c h_{ab} &=& \nabla_c\left(\frac{1}{\lambda}t_at_b\right)~, \nonumber \\
\Rightarrow t^a\nabla_c t_a &=& - t^a\nabla_a t_c = -\frac{1}{2} \nabla_c\lambda~,
\eeq
where the second line follows by expanding the right hand side, taking the trace over $\left( a,b\right)$ and 
noting that $\nabla_{(a} t_{b)}=0$. Using Eqs.~(\ref{eq:app1.5}) and (\ref{eq:app6}) in Eq.~(\ref{eq:app5}) gives,
\beq
&& h^{ab} \nabla_a h^e_{~b} \nabla_e R_{cd} + \frac{1}{2\lambda} h^{ab} \left( \nabla_a\lambda\right)
\left( \nabla_b R_{cd}\right) 
 \nonumber \\
&&+\frac{1}{4\lambda^3}\left[ t_c t^a\left( \nabla_b\lambda\right) \left( \nabla^b\lambda\right)
- \lambda\left( \nabla_c\lambda\right) \left(\nabla^a\lambda\right) \right] R_{ad}
\nonumber \\
&& - \frac{1}{2\lambda^2} \left[ t^a\left(\nabla_d\lambda\right) -t_d\left( \nabla^a\lambda\right)\right]
\left[ t^b\left( \nabla_c\lambda\right) - t_c \left( \nabla^d\lambda\right)\right] R_{ba}
\nonumber \\
&&+ \frac{1}{4\lambda^3} \left[ t^a t_d \left( \nabla_b\lambda\right)\left( \nabla^b\lambda\right)
-\lambda\left( \nabla^a\lambda\right)\left( \nabla_d\lambda\right) \right] R_{ca} - m^2R_{cd} =0~,
\eeq
which is Eq.~(\ref{eq:3.5}).

Finally, we need to show that the terms given in Eq.~(\ref{eq:pos_comb}) are positive i.e.\ that
\be
 \left( D_a \bar{R}_{bc} \right) \left( D^a \bar{R}^{bc}\right) +2\left( D_a\bar{R}_{bc}\right)
\left( D^b\bar{R}^{ac}\right) \geq 0~.
\ee
To do this, consider an arbitary tensor $T_{abc}$ that is symmetric in the final two indices
i.e. $T_{abc}=T_{acb}$. Now we use a tetrad basis, $e_a^i$, that is othonormal i.e. $e^i_a e^a_j = \delta^i_j$,
to decompose the (positive definite) metric as,
\be\label{eq:basis}
 h_{ab} = e^i_a e^j_b \delta_{ij}~.
\ee
With this one can write,
\beq
 T^{abc} T_{abc} &=& h^{ad}h^{de} h^{cf} T_{abc}T_{def} = T_{ijk} T_{mnp} \delta^{im} \delta^{jn} \delta^{kp}~,
\nonumber \\
 T^{abc} T_{bac} &=& h^{ae}h^{bd} h^{cf} T_{abc}T_{def} = T_{ijk} T_{mnp} \delta^{in} \delta^{jm} \delta^{kp}~.
\eeq
We now form the combination,
\be\label{eq:comb}
 T^{abc}T_{abc}+2T^{abc}T_{bac}~,
\ee
and expand out the Einstein summations. Since $T_{abc}$ is a three index tensor, the three possibilities
we need to consider are: the indices are the same, one is distinct and all are distinct. The first possibility clearly
involves only positive terms. In the orthonormal tetrad basis given by Eq.~(\ref{eq:basis}) these are
\be
  T^{abc}T_{abc}+2T^{abc}T_{bac} = 3\left( T_{111}T_{111} +T_{222}T_{222} + T_{333}T_{333}\right) +\dots~,
\ee
where the dots remind us that there are other terms we need to account for. The terms in Eq.~(\ref{eq:comb})
that arise from one index being distinct, with the other two being the same, can, by using Eq.~(\ref{eq:basis}),
be grouped to give,
\be\label{eq:group1}
  T^{abc}T_{abc}+2T^{abc}T_{bac} = \sum_{i\neq j} \left( 2 T_{iij} + T_{ijj} \right)\left(  2 T_{iij} + T_{ijj} \right)
+\dots~,
\ee
where there is no implicit summation over repeated indices and again the dots indicate that there
we are considering only certain terms. Finally we
need to account for the terms in Eq.~(\ref{eq:comb}) for which all the indices are distinct, which 
are given by,
\be\label{eq:group2}
  T^{abc}T_{abc}+2T^{abc}T_{bac} = \sum_{i\neq j\neq k} 2\left( T_{ijk}  + T_{jik} + T_{kij} \right)
\left( T_{ijk}  + T_{jik} + T_{kij} \right) + \dots~,
\ee
where again there is no implicit summation over repeated indices and the dots indicate that there
are additional terms. In Eq.~(\ref{eq:group1}) and Eq.~(\ref{eq:group2}), crucial use has been made
of the fact that $T_{ijk}=T_{ikj}$. Finally, putting these together we find that, in this orthonormal
tetrad basis,
\beq\label{eq:pos_final}
 T^{abc}T_{abc}+2T^{abc}T_{bac} &=& 3\sum_i T_{iii}T_{iii} + 
\sum_{i\neq j} \left( 2 T_{iij} + T_{ijj} \right)\left(  2 T_{iij} + T_{ijj} \right) \nonumber \\
&&+\sum_{i\neq j\neq k} 2\left( T_{ijk}  + T_{jik} + T_{kij} \right)
\left( T_{ijk}  + T_{jik} + T_{kij} \right)~,
\eeq
where once again there is no implicit summation over repeated indices. Clearly each term in Eq.~(\ref{eq:pos_final})
is positive. It is worth noting here that this result hold in any number of dimensions, provided the
metric, $h_{ab}$, is positive definite.

One may be concerned by the fact that if $T_{abc}$ were also anti-symmetric in the first two indices,
the combination given on the left-hand-side of Eq.~(\ref{eq:pos_final}), would appear to be negative. However there are no three index
tensors that are anti-symmetric in the first pair of indices and symmetric in the second pair, as can
easily be checked,
\be
 T_{abc} = - T_{bac} = -T_{bca} = T_{cba} = T_{cab} = - T_{acb} = - T_{abc}~.
\ee

Eq.~(\ref{eq:pos_final}) holds for any three index tensor that is symmetric in its final pair of indices
and in particular it holds for $T_{abc} = D_a \bar{R}_{bc}$, thus,
\be
 \left( D_a \bar{R}_{bc} \right) \left( D^a \bar{R}^{bc}\right) +2\left( D_a\bar{R}_{bc}\right)
\left( D^b\bar{R}^{ac}\right) \geq 0~,
\ee
as required.

\end{document}